\documentclass[reqno, 11pt]{amsart}
\usepackage[utf8]{inputenc}
\usepackage[all]{xy}
\usepackage{enumerate}
\usepackage[english]{babel}
\usepackage{graphicx,xcolor} 
\usepackage[colorlinks=true, linkcolor=blue, citecolor=green, urlcolor=purple]{hyperref}
\usepackage{setspace}
\usepackage{cancel}
\usepackage[left=2.5cm, right=2.5cm, bottom=2cm]{geometry}
\usepackage{amsmath, amssymb, amsthm, epsfig, mathtools}
\usepackage{latexsym}
\usepackage{bbm}
\usepackage[nameinlink,noabbrev]{cleveref}
\usepackage[pagewise]{lineno}
\usepackage{comment}

\def\today{\ifcase\month\or
  January\or February\or March\or April\or May\or June\or
  July\or August\or September\or October\or November\or December\fi
  \space\number\day, \number\year}

\newcommand{\R}{\mathbb{R}}
\newcommand{\N}{\mathbb{N}}
\newcommand{\T}{\mathbb{T}}
\newcommand{\Z}{\mathbb{Z}}
\newcommand{\eps}{\varepsilon}
\newcommand{\dis}{\displaystyle}
\renewcommand{\Re}{\text{Re}}

\DeclareMathOperator{\Span}{span}
\DeclareMathOperator{\PWC}{PWC}
\newtheorem{theorem}{Theorem}[section]
\newtheorem{lemma}{Lemma}[section]
\newtheorem{proposition}{Proposition}[section]
\newtheorem{corollary}{Corollary}[section]
\theoremstyle{definition}
\newtheorem{definition}{Definition}[section]

\theoremstyle{remark}
\newtheorem{remark}{Remark}[section]

\crefname{theorem}{Theorem}{Theorems}
\crefname{lemma}{Lemma}{Lemmas}
\crefname{proposition}{Proposition}{Propositions}
\crefname{corollary}{Corollary}{Corollaries}
\crefname{definition}{Definition}{Definitions}
\crefname{example}{Example}{Examples}
\crefname{remark}{Remark}{Remarks}
\crefname{assum}{Hypothesis}{Hypotheses}
\Crefname{section}{Appendix}{Appendix}

\begin{document}

\date{\today}

\title{Bilinear controllability for the linear KdV--Schr\"{o}dinger equation}

\author[R. Buffe]{Rémi Buffe}
\address{Universit\'e de Lorraine, CNRS, IECL, F-54000 Nancy, France.} 
\email{remi.buffe@univ-lorraine.fr}
\author[A. Duca]{Alessandro Duca}
\address{Universit\'e de Lorraine, CNRS, Inria, IECL, F-54000 Nancy, France.}
\email{alessandro.duca@inria.fr}
\author[H. Parada]{Hugo Parada}
\address{Universit\'e de Lorraine, CNRS, Inria, IECL, F-54000 Nancy, France.}
\email{hugo.parada@inria.fr}
\thanks{This project is funded by the Agence Nationale de la Recherche by the QuBiCCS project (ANR-24-CE40-3008)}

\begin{abstract}

We study the controllability of a linear KdV--Schr\"odinger equation on the one-dimensional torus via purely imaginary bilinear controls. Considering controls spanning a suitable finite number of Fourier modes, we prove small-time global approximate controllability in $L^2(\mathbb{T})$. The result holds between any pair of states with the same norm and is obtained via the saturation method by following the idea introduced in \cite{pozzoli2024small}. We first establish small-time controllability for phase multiplications, and then generate transport operators associated with diffeomorphisms of the torus. Finally, we combine these results to recover global approximate controllability. Note that the controllability property holds independently of the Schr\"odinger component of the dynamics, which may in particular be taken to vanish.

\end{abstract}

\maketitle




\section{Introduction}
We consider the following Korteweg-de Vries Schrödinger equation on the torus $\mathbb{T}:=\R/(2\pi\Z)$ in the presence of bilinear controls
\begin{equation}
\label{eq:kdv-sch}
\begin{cases}
\partial_t\psi+\partial_x^3\psi-i\alpha\partial_x^2\psi-i\lambda |\psi|^{2}\psi
+\beta (|\psi|^{2}\psi)_x
+\gamma (|\psi|^{2})_x\psi=i(u\cdot Q)\psi,&  \text{ on } (0,T)\times\T,\\
\psi(0,\cdot)=\psi_0.
\end{cases}
\end{equation}
Here, $\alpha,\lambda,\beta,\gamma\in\R$ are given parameters. In particular, $\alpha$ measures the strength of the Schrödinger dispersion. The vector-valued function $Q:\T\rightarrow \R^q$ with $q\in \N^*$ is fixed and $u:(0,T)\rightarrow \R^q$ is the control. The equation under consideration can be viewed as a controlled Airy–Schrödinger equation, combining Korteweg–de Vries–type third-order dispersion with nonlinear Schrödinger dynamics, this model has been originally introduced by Hasegawa and Kodama in \cite{kodama1985optical,kodama1987nonlinear} to describe pulse propagation in optical fibers. 

\medskip
The parameter $\alpha$ can be set to zero, thus recovering the linear KdV dispersion with complex-valued state in the presence of bilinear action. The KdV equation $\partial_t\psi+\partial_x^3\psi+\psi\partial_x \psi= 0$ is a classical model for the unidirectional propagation of small-amplitude long waves in weakly nonlinear dispersive media. Originally derived by Boussinesq in the 1870s and later rediscovered and popularized by
Korteweg and de~Vries in their 1895 work, the equation arose as an asymptotic model for the
propagation of long, small-amplitude water waves. It subsequently became fundamental in plasma physics, nonlinear lattices, and ion–acoustic wave theory. The discovery of solitons and the inverse scattering transform further stimulated a vast mathematical literature, ranging from integrable systems to well-posedness theory and long-time dynamics.

The mathematical control theory of the KdV equation began with the pioneering works of
Russell and Zhang~\cite{russell1993controllability,russell1996exact}, who established local internal controllability and stabilization in the periodic setting. Since then,
controllability and stabilization of KdV have been investigated extensively. Boundary control problems reveal a striking dichotomy: the system behaves essentially parabolic when the control acts at the left endpoint, and hyperbolic when it acts at the right endpoint. Local exact controllability via boundary controls was proved for non-critical lengths, and later extended to critical domains through refined spectral and moment arguments. 

On periodic domains, the fundamental work of Laurent, Rosier, and Zhang \cite{laurent2010control} established global exact controllability and global exponential stabilization for KdV with internal additive controls. Their analysis relies on the propagation of compactness and regularity in Bourgain spaces for the associated linear system, combined with sophisticated feedback constructions that exploit the nonlinear dispersive structure of the equation.

Despite this extensive progress, comparatively little is known about {\em multiplicative} or {\em bilinear} controls for the KdV-type dynamics. Bilinear control terms of the form
\begin{equation}
\label{eq:real_control}
(u(t)\cdot Q(x))\psi,\end{equation}
arise naturally in models where the medium is modulated or actuated through spatially dependent coefficients rather than external forcing an effect that occurs, for instance, in plasma manipulation or variable–depth water wave channels (see \Cref{app: physical}). In the present work, we consider a purely imaginary bilinear action. This choice is natural in Schrödinger models for optical fibers. In the plasma setting, this term can be interpreted at the envelope level, where plasma inhomogeneities generate phase modulation or local detuning \cite{bussac1985soliton,benisti2016envelope}.  While bilinear controllability has recently been developed for Schrödinger, heat, Burgers, and wave equations through geometric saturation techniques \cite{duca2024bilinear,duca2025small,duca_burger,pozzoli2024small}, analogous results for KdV remain scarce. To the best of our knowledge, the only existing work addressing bilinear KdV-type control is due to Muñoz \cite{munoz2014approximate}, who proved a large--time approximate controllability to solitons result. We also mention the work of Chen \cite{chen2023global} where global approximate controllability was proved by acting with additive controls using saturation techniques. The extension of these bilinear techniques to the KdV equations with a control law of the form \eqref{eq:real_control} remains an open problem and will be addressed in future work. 

In this work, we first focus on linear version of \eqref{eq:kdv-sch}, namely
\begin{equation}
\tag{KdV-Sch}
\label{eq:lkdv-sch}
\begin{cases}
\partial_t\psi+\partial_x^3\psi-i\alpha\partial_x^2\psi=i(u\cdot Q)\psi&  \text{ in } (0,T)\times\T,\\
\psi(0,\cdot)=\psi_0&  \text{ in } \T,
\end{cases}
\end{equation}
where $\alpha\in\R$, $u\in \PWC(0,T;\R^q)$ and the control profiles $Q_{j}:\T \rightarrow \R$ satisfy
\begin{equation}
\label{eq: controls}
\{1, \sin(x), \cos(x), \sin(2x), \cos(2x)\} \subset \Span\{Q_0,\dots,Q_{q-1}\}
\end{equation}
In this setting, the bilinear control term is purely imaginary. Therefore, the full dynamics is conservative in $L^{2}(\T)$. This constraint means that we cannot affect the $L^{2}$-norm of the solution; consequently, our control results are stated for initial data and targets having the same norm. The main result of this work is the following one.

\begin{theorem}[Small-time global $L^2$ approximate controllability]
\label{th: L2 global}
For every $\psi_0,\psi_1\in L^2(\T)$ with $\|\psi_0\|_{L^{2}(\T)}=\|\psi_1\|_{L^{2}(\T)}$ and $\eps>0$, there exists a time $T\in(0,\eps)$ and a control $u\in \PWC(0,T;\R^q)$ such that the corresponding solution of \eqref{eq:lkdv-sch} satisfies
\[\|\psi(T)-\psi_{1}\|_{L^{2}} < \varepsilon .\]
\end{theorem}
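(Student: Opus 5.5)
The plan follows the saturation strategy of \cite{pozzoli2024small}, adapted to the third-order operator $A=-\partial_x^3+i\alpha\partial_x^2$, which generates a unitary group on $L^2(\T)$. There are three steps.

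\textbf{Step 1: small-time control of phase multiplications.} We show that for every real trigonometric polynomial $\phi$ and every $\tau>0$, the operator $e^{i\phi}$ is $L^2$-approximately reachable in time less than $\tau$. For $\phi\in\Span\{Q_j\}$ this is direct: take controls $u/\delta$ on an interval of length $\delta$. Duhamel estimates, uniform on compact sets of $H^3$ (and then of $L^2$ by density and unitarity), give
\[
e^{\delta(A+i\delta^{-1}\phi)}\psi\to e^{i\phi}\psi \quad\text{as } \delta\to0 .
\]
To enlarge the set of reachable phases, we use the conjugation limit
\[
e^{-i\phi/\delta^{1/3}}\,e^{\delta A}\,e^{i\phi/\delta^{1/3}}\ \longrightarrow\ e^{i(\phi')^3}\qquad(\delta\to0).
\]
This follows from the formula
\[
e^{-i\phi/\eps}\,\partial_x\,e^{i\phi/\eps}=\partial_x+\tfrac{i}{\eps}\phi' .
\]
Expanding $(\partial_x+i\phi'/\eps)^3$ with $\eps=\delta^{1/3}$, the leading term of $\delta(\cdots)^3$ is $-i(\phi')^3$ times $-1$, which gives multiplication by $e^{i(\phi')^3}$ up to sign. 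The lower-order terms are $O(\delta^{1/3})$ on smooth data, and the $\alpha$-term is $O(\delta^{1/3})$ as well, so it plays no role. This explains why the result is independent of $\alpha$.

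Starting from $\{1,\cos x,\sin x,\cos 2x,\sin 2x\}$, cubes of derivatives of linear combinations generate new frequencies. Combining these with sums (Trotter products of reachable phases) and polarization identities for cubes, we show by induction that the set of reachable phases is a vector space containing all $\cos kx$ and $\sin kx$. We then pass to all of $L^\infty$ phases by density. \textbf{The main obstacle is this saturation step:} verifying that cubic polarization from modes $1$ and $2$ really produces every mode $k$. This is why the second harmonics are needed in \eqref{eq: controls}. We would first try to extract $\cos((k+1)x)$ from the symmetrized trilinear form $\phi_1'\phi_2'\phi_3'$ with $\phi_1,\phi_2\in\{\cos x,\sin x,\cos 2x,\sin 2x\}$ and $\phi_3$ of frequency $k-1$, and then remove the lower-frequency terms, which are already reachable by induction.

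\textbf{Step 2: transport operators.} With arbitrary phases available, we conjugate the group by phases at a slower scale. Choosing $e^{-i\phi/\eps}e^{\delta A}e^{i\phi/\eps}$ with $\delta=\eps^{2}$ makes the first-order term $3\delta\eps^{-2}(\phi')^2\partial_x$ (plus its symmetrizing zeroth-order part) dominant, while the phase term $\delta\eps^{-3}(\phi')^3$ is cancelled by a preceding phase $e^{-i\delta\eps^{-3}(\phi')^3}$ that is reachable by Step 1. This yields approximately the unitary flow of the vector field $f=3(\phi')^2$ acting as
\[
\psi\mapsto\sqrt{\partial_x P_t}\;\psi\circ P_t .
\]
Sign-changing vector fields are obtained by using $\delta\to-\delta$ effectively through compositions, or by taking brackets with commutators of already reachable flows. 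Composition and density then give transport by any orientation-preserving diffeomorphism isotopic to the identity, all in arbitrarily small time.

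\textbf{Step 3: conclusion.} Given $\psi_0,\psi_1$ of equal norm, approximate both by smooth nonvanishing functions of the same norm, which is possible in one space dimension. Choose a diffeomorphism $P$ that maps the normalized distribution of $|\psi_0|^2$ to that of $|\psi_1|^2$, so that $\sqrt{P'}\,|\psi_0\circ P|=|\psi_1|$. Then fix the remaining phase by a multiplication $e^{i\theta}$ from Step 1. Since each step is realized in time smaller than any prescribed value and all the operators involved are unitary, errors add up, and we obtain $T<\eps$.
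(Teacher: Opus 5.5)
Your Step~1 is the paper's argument: conjugation by $e^{\pm i\phi/\delta^{1/3}}$ around a free evolution of length $\delta$, closure of the admissible phases under $\phi\mapsto(\phi')^3$, and the same saturation induction extracting $\cos(Nx)$ from $\cos((N-2)x)\cos^2x$. Your Step~3 is a legitimate, more hands-on substitute for the paper's appeal to Theorems~13--14 of Beauchard--Pozzoli. In one dimension the cumulative-mass map gives $P$ with $P'\,|\psi_0\circ P|^2=|\psi_1|^2$, and the remaining unimodular factor is $e^{i\theta}$ with $\theta\in L^\infty$ (possibly discontinuous because of winding), which Step~1 covers.

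The gaps are in Step~2. \textbf{First}, the divergent phase cannot be removed by a single preceding factor. With $\delta=\eps^2$, one block is
\[
e^{-i\phi/\eps}e^{\eps^2A}e^{i\phi/\eps}=\exp\bigl(i\eps^{-1}h-i\alpha(\phi')^2+T_f+O(\eps)\bigr),\qquad h=(\phi')^3,\quad f=3(\phi')^2.
\]
Note in passing that the $\alpha$-term is of order one at this scale, not negligible. Since multiplication by $h$ does not commute with $T_f$, the phase accumulates along characteristics:
\[
e^{T_f+ih/\eps}\psi=e^{\frac{i}{\eps}\int_0^1 h\circ\phi_f^s\,ds}\,U_{\phi_f^1}\psi .
\]
Composing with $e^{-ih/\eps}$ on either side therefore leaves a residual phase of size $\eps^{-1}$, nonzero in general, so the result does not converge to $U_{\phi_f^1}$. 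The compensation must be inserted inside the generator. Split the free evolution into $n$ pieces, each accompanied by $e^{-i(\eps^{-1}h-\alpha(\phi')^2)/n}$. Let $n\to\infty$ by Trotter--Kato, and only then let $\eps\to0$; this is the paper's $W_{\tau,n}$.

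\textbf{Second}, and more seriously, every field you generate, $3(\phi')^2$, is nonnegative. Neither of your remedies yields the opposite direction. The free evolution cannot be run backwards in small time. Brackets obtained from group commutators $e^{-sX}e^{-sY}e^{sX}e^{sY}$ presuppose the backward flows, which is why the Lie-algebra step needs $U_{\phi_f^t}$ reachable for all $t\in\R$.

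The missing idea is recurrence on $\T$. Since $1=\sin^2x+\cos^2x$ lies in the cone (sums of fields via Trotter), take $f\ge0$ in the cone and $\delta>0$. Then $g=f+\delta$ is positive, and its flow is periodic with period $\Pi(g)=\int_{\T}dx/g$. Hence $e^{-\kappa T_g}=e^{sT_g}$ for any $s>0$ with $s+\kappa\in\Pi(g)\N$, which is a forward transport. Then
\[
e^{-T_f}=\lim_n\bigl(e^{-T_g/n}e^{\delta\partial_x/n}\bigr)^n .
\]
Only after this step can brackets (e.g.\ $[\sin^2x\,\partial_x,\cos^2x\,\partial_x]=-\sin(2x)\,\partial_x$) and density give all of $\mathrm{Vec}(\T)$. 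That in turn gives every $U_P$, which your Step~3 relies on.
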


\begin{remark}
The previous result holds for all $\alpha \in \R$. In particular, by setting $\alpha=0$ we recover small-time global approximate controllability for the linear complex KdV equation.
\end{remark}

\subsection*{The saturation method}
The \emph{saturation method} is a geometric control strategy designed for bilinear evolution equations and has recently proved effective across a wide range of PDEs. Its core idea is to compensate for the apparent weakness of a finite-dimensional, multiplicative control by exploiting fast oscillations and conjugation effects that accumulate into nontrivial asymptotic dynamics. The method relies on applying large controls on shrinking time intervals so that, after suitable rescaling, the combined effect of the drift and the control operators produces \emph{effective directions} in the state space that are not directly accessible. These directions arise as limits of conjugated flows, and can be explicitly identified with iterated commutators (or Lie brackets) of the generators. When the family of these generated directions spans a dense subspace a property referred to as \emph{saturation}, one obtains a controllability mechanism with remarkable strength. 

The origins of this philosophy trace back to the works of Agrachev and Sarychev on additive
controls for fluid equations. In their seminal papers \cite{agrachev2005navier, agrachev2006controllability}, they established global approximate controllability of the 2D Navier--Stokes and Euler systems by constructing fast oscillating controls that produce, in the limit, all Lie brackets generated by the drift and the forced modes. More recently, saturation techniques have been adapted to \emph{bilinear} control systems. In particular, Nersesyan and Duca \cite{duca2024bilinear} implemented this strategy for a nonlinear Schrödinger equation and proved small-time approximate controllability between eigenmodes via bilinear interactions. 

This approach has two major consequences: First, it allows one to achieve \emph{global approximate controllability}. Second, because the saturation limit is performed on arbitrarily small time intervals, the method naturally leads to \emph{small-time controllability} results. In this sense, saturation provides a flexible and robust framework to ``amplify'' weak bilinear controls, turning them into a rich family of effective directions that ensure approximate controllability in large classes of nonlinear PDEs.
\medskip

\section{Preliminaries}
Throughout this article, we use the notation $X \lesssim Y$ if there exists a positive constant $C$, independent on the relevant parameters, such that $X \leq CY.$ Unless explicitly specified, function spaces are always defined as $\mathbb C$-vectorial spaces.
\subsection{Sobolev spaces and well-posedness}
We start by defining the functional framework of this work. For $s\in\R$, the Sobolev space $H^s(\T)$ is defined by
\[H^s(\T)
:=\big\{ u\in\mathcal{D}'(\T),\;
\|u\|_{H^s}^2
:=\sum_{k\in\Z}\langle k\rangle^{2s}
|\widehat u(k)|^2 <\infty \big\},\]
where $\widehat u(k)$ denotes the $k$--th Fourier coefficient of $u$ and $\langle k\rangle=(1+k^2)^{1/2}$. 
\begin{theorem}[Well-posedness of KdV-Sch in $H^{s}(\T)$]
Let $s\geq0$, $T>0$,  $\psi_0 \in H^{s}(\T)$, $u \in \PWC(0,T;\R^{q})$  and $Q\in C^{\infty}(\T;\R^{q})$. Then, there exists a unique  $\psi \in C([0,T];H^{s}(\T))$, solution of \eqref{eq:lkdv-sch}. Moreover, there exists $C=C(s,u,Q)>0$ such that
\[\|\psi(t)\|_{H^{s}} \le C e^{Ct}\|\psi_0\|_{H^{s}},\quad\forall t\in [0,T].\]
\end{theorem}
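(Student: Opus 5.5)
The proof is by a piecewise semigroup argument in Fourier variables. Since $u$ is piecewise constant, we can subdivide $[0,T]$ into finitely many intervals $[t_{j-1},t_j]$ on which $u\equiv u_j\in\R^q$ is constant. On each interval the equation is autonomous:
\[
\partial_t\psi = A_j\psi, \qquad A_j := -\partial_x^3 + i\alpha\partial_x^2 + iV_j, \qquad V_j := u_j\cdot Q \in C^\infty(\T;\R).
\]
It therefore suffices to show that each $A_j$ generates a $C_0$-group $e^{tA_j}$ on $H^s(\T)$ with $\|e^{tA_j}\|_{\mathcal L(H^s)}\le e^{C_j|t|}$. The solution is then the concatenation $\psi(t)=e^{(t-t_{j-1})A_j}\psi(t_{j-1})$, and continuity in $H^s$ is inherited piece by piece. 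The global bound follows by multiplying the finitely many estimates, which gives $C$ depending on $\max_j|u_j|$, on $Q$ and on $s$. Uniqueness is understood as uniqueness of mild solutions on each interval, which is automatic for a $C_0$-group.

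For the generation step, split $A_j=L+B_j$ with $L:=-\partial_x^3+i\alpha\partial_x^2$ and $B_j\psi:=iV_j\psi$.

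\textbf{The operator $L$.} It is a Fourier multiplier with symbol
\[
-(ik)^3 + i\alpha(ik)^2 = i(k^3-\alpha k^2) \in i\R .
\]
Hence $e^{tL}$ is a unitary group on every $H^s$, with norm exactly $1$.

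\textbf{The operator $B_j$.} Multiplication by the smooth function $iV_j$ is bounded on $H^s(\T)$, since $H^s$ is stable under multiplication by $C^\infty$ functions. Its norm satisfies $\|B_j\|_{\mathcal L(H^s)}\lesssim_s \|V_j\|_{C^{\lceil s\rceil}}\lesssim |u_j|\,\|Q\|_{C^{\lceil s\rceil}}$.

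The bounded perturbation theorem then shows that $L+B_j$ generates a $C_0$-group on $H^s$ with $\|e^{tA_j}\|\le e^{\|B_j\||t|}$. This gives existence, uniqueness and the exponential estimate with $C=\max_j\|B_j\|$.

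The only point needing care is the multiplier estimate for non-integer $s$. For integer $s$ it follows from Leibniz's rule. For general $s\ge0$ one can use interpolation between consecutive integers, or the elementary Fourier bound
\[
\langle k\rangle^s \lesssim \langle k-m\rangle^s\langle m\rangle^s,
\]
together with the rapid decay of $\widehat{V_j}(m)$ and Young's inequality on $\ell^2$. I expect no real obstacle.

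As a side remark, for $s=0$ we get $C=0$ exactly, since both $L$ and $B_j$ are skew-adjoint on $L^2$. This is the $L^2$ conservation used in \Cref{th: L2 global}.
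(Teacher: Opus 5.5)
Your proposal is correct and is essentially the paper's argument. Both rest on the same two facts: $e^{tL}$ is unitary on $H^s$, and multiplication by $u\cdot Q$ is bounded on $H^s$. The bounded perturbation theorem you invoke on each constancy interval is exactly the Duhamel fixed-point plus Grönwall argument that the paper runs directly on $[0,T]$. Your version additionally justifies the fractional-$s$ multiplier bound, which the paper takes for granted, and gives the explicit constant $C=\max_j\|B_j\|_{\mathcal L(H^s)}$.
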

\begin{proof}
We denote by $L := -\partial_x^3 + i\alpha\partial_x^2$, $D(L)=H^{s+3}(\mathbb T)$, the skew-adjoint linear operator on $H^s(\mathbb T)$, which generates a unitary group $(e^{tL})_{t\in\mathbb R}$. Using Duhamel’s formula, we obtain
\begin{equation}\label{eq:duhamel}
\psi(t)
= e^{tL}\psi_0
+\int_0^t e^{(t-\tau)L}i(u\cdot Q)\psi(\tau)d\tau .
\end{equation}
Since $e^{tL}$ is unitary on $H^s(\mathbb T)$, we have $\|e^{tL}\psi_0\|_{H^s}=\|\psi_0\|_{H^s}$. Moreover, we also have
\[\|(u(t)\cdot Q)\psi\|_{H^s}\le C |u(t)|\|\psi\|_{H^s}.\]
Taking the $H^{s}$-norm in \eqref{eq:duhamel} and using the above estimates, we obtain
\[\|\psi(t)\|_{H^{s}}\le\|\psi_0\|_{H^{s}}+C\int_0^t |u(\tau)|\|\psi(\tau)\|_{H^{s}}d\tau.\]
Since $u\in \PWC(0,T;\R^{q})$, it follows that
\[\|\psi(t)\|_{H^{s}}
\le\|\psi_0\|_{H^{s}}+C\|u\|_{L^{\infty}(0,T)}
\int_0^t \|\psi(\tau)\|_{H^{s}}d\tau .\]
By Grönwall's lemma, there exists a constant $C:=C(s,u,Q)>0$ such that
\begin{equation}
\label{eq: Hs linear}\|\psi(t)\|_{H^{s}}\le C e^{Ct}\|\psi_0\|_{H^{s}},\quad \forall t\in[0,T].    
\end{equation}
Local existence and uniqueness follow from a fixed point argument applied to
the Duhamel formulation. The $H^s$ energy estimate \eqref{eq: Hs linear} prevents blow-up and allows us to extend the solution on $[0,T]$.
\end{proof}

\subsection{Saturating subspaces}

Let $n\in \N$, with $n\geq 2$. Let $\mathcal{H}$ be a finite-dimensional subspace of $C^\infty(\T;\R)$. Following the standard saturation approach, we define $\mathcal F_n(\mathcal H)$ as the largest vector subspace of $C^\infty(\T;\R)$ whose elements can be represented in the form
\[\phi_0-\sum_{j=1}^N\alpha_j(\phi_j')^n, \quad N\ge1,\quad \phi_0,\phi_1,\dots,\phi_N\in\mathcal H, \quad \alpha_j\geq 0,\]
we emphasize that the integer $N$ may depend on the element under consideration. When $n$ is odd, the coefficients $\alpha_j\ge0$ can be absorbed into the functions $\phi_j$, and in particular we have
\[\mathcal F_n(\mathcal H)=\mathcal H + \Span \big\{(\phi')^n,\; \phi\in\mathcal H\big\}.\] 
Note that $\mathcal F_n(\mathcal{H})$ is well-defined and finite-dimensional. 
\begin{definition}
We say that a finite-dimensional subspace $\mathcal{H}_0\subset C^\infty(\T,\R)$ is saturating if $\mathcal{H}_\infty$ is dense in $H^s(\T;\R)$,  for all $s\geq0$ where
\begin{equation*}\label{DefOfHInfty}
\mathcal H_\infty:=\bigcup_{j=0}^\infty \mathcal H_j,
\end{equation*}
with $\mathcal H_{j+1}:=\mathcal F_n(\mathcal H_j), \; j\ge0.$
\end{definition}
In the following result, we establish a saturation property for a general odd integer power $n\ge 3$. The main application in this article corresponds to the cubic case $n=3$, but the argument naturally extends to arbitrary odd $n\geq 3$, with no additional difficulty. For this reason, we present the construction in this broader framework, which may also be of independent interest.   We expect that an analogous statement should remain true for even powers, but our proof relies on the odd-power polarization argument and does not directly cover that case. 

\begin{proposition}\label{prop:saturating_general_power}
Let $n\in\N$ odd with $n\geq 3$. The finite dimensional space
\[\mathcal H_0=\Span\{1,\cos(x),\sin(x),\dots,\cos((n-1)x),\sin((n-1)x)\}\] is saturating.
\end{proposition}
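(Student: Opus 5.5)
The strategy is to show by induction on $m$ that $\mathcal H_\infty$ contains every trigonometric polynomial $\cos(mx),\sin(mx)$ for all $m\ge0$. Since trigonometric polynomials are dense in $H^s(\T;\R)$ for every $s\ge0$, this proves saturation. Because $n$ is odd, we use the description $\mathcal F_n(\mathcal H)=\mathcal H+\Span\{(\phi')^n:\phi\in\mathcal H\}$. Each $\mathcal H_j$ is therefore a vector space containing $\mathcal H_{j-1}$ and all $n$-th powers of derivatives of its elements. By polarization, $\mathcal H_{j+1}$ then contains all symmetric products $\phi_1'\cdots\phi_n'$ with $\phi_i\in\mathcal H_j$. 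To see this, expand $(\sum_i t_i\phi_i')^n$ in the real parameters $t_i$. The map $t\mapsto(\sum t_i\phi_i')^n$ takes values in the finite-dimensional space $\mathcal F_n(\mathcal H_j)$, so each coefficient of the multivariate polynomial, and in particular that of $t_1\cdots t_n$, lies in that space. This gives the multilinear product up to the factor $n!$.

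For the induction step, suppose $\mathcal H_j$ contains $1$ and $\cos(kx),\sin(kx)$ for all $k\le M$, where $M\ge n-1$. Derivatives of these give $\cos(kx),\sin(kx)$ for $1\le k\le M$. We need an $n$-fold product of such derivatives whose top Fourier mode is a new frequency and whose lower-order remainder already lies in $\mathcal H_j$. Choose $\phi_1'=\cos(Mx)$ or $\sin(Mx)$, $\phi_2'=\cos(x)$ or $\sin(x)$, and the remaining $n-2$ factors of the form $e^{\pm ix}$ combined to produce a constant.

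One natural concrete choice uses $\cos^2x+\sin^2x=1$ and polarization once more. The products
\[\cos(Mx)\cos(x)\,(\cos^2x)^{(n-3)/2}\cos x\quad\text{and}\quad \cos(Mx)\cos(x)\,(\sin^2x)^{(n-3)/2}\sin x\]
do not quite work as written, so I would take a cleaner route. The product $\cos(Mx)\cos(x)\,g$, where $g$ is a product of $n-2$ factors from $\{\cos x,\sin x\}$, is a trigonometric polynomial of degree at most $M+n-1$. Its top coefficients are governed by $e^{i(M+n-1)x}$ when all factors point "up". Taking linear combinations over all choices of factors, we can realize $\Re$ and $\mathrm{Im}$ of $e^{iMx}(e^{ix})^{n-1}=e^{i(M+n-1)x}$ plus terms of lower frequency. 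So $\cos((M+n-1)x)$ and $\sin((M+n-1)x)$ belong to $\mathcal H_{j+1}$ modulo lower frequencies. The lower frequencies in the range $M+1,\dots,M+n-2$ are the issue.

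To get every frequency, I would instead vary the frequencies of the factors. Use one factor of frequency $k\le M$ and $n-1$ factors of frequency $1$ arranged so that exactly $r$ of them point "up" and the others cancel in pairs. With $n-1$ even we can pair $(e^{ix},e^{-ix})$, which gives $e^{i(k+1)x}$ times constants plus lower-frequency terms. The cleanest form is the identity $\cos(kx)\cos x\cdot(\cos^2x+\sin^2x)^{(n-3)/2}\cdot 1$. However, the constant $1$ is not a derivative, and this is the obstacle. To handle it, I would expand $(\cos^2x+\sin^2x)^{(n-3)/2}=1$ as a sum of products of $n-3$ derivatives of $\sin x$ and $\cos x$. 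Then $\cos(kx)\cos(x)\cdot 1$ is a sum of $(n-1)$-fold products, one factor short of $n$. To fix the count, use $\cos(kx)=\cos(kx)(\cos^2x+\sin^2x)$ when $n-1$ is odd. If this parity fails, pair differently, for instance by writing a factor $\cos x$ as the sum $\cos x\cos^2x+\cos x\sin^2x$. This yields an $n$-fold product equal to $\tfrac12(\cos((k+1)x)+\cos((k-1)x))$.

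By induction, frequency $k-1$ is already present, so $\cos((k+1)x)\in\mathcal H_{j+1}$ with $k=M$. The same argument applied to $\sin$ gives $\sin((M+1)x)$. The base case is $M=n-1$, which holds by the hypothesis on $\mathcal H_0$ (and $M\ge1$ suffices for the argument). The main obstacle I expect is getting the parity and degree bookkeeping right, so that an exact $n$-fold product of derivatives reduces to a frequency-shift by one plus known terms.
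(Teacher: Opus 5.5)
Your polarization step is fine (it is the same identity the paper uses), but your induction step has a genuine gap: with $n$ odd, the shift of frequency by one that you aim for cannot be produced. A product of $\cos(kx)$ or $\sin(kx)$ with $n-1$ factors from $\{\cos x,\sin x\}$ is $\cos(kx)$ (or $\sin(kx)$) times a homogeneous polynomial of degree $n-1$ in $(\cos x,\sin x)$. Since $n-1$ is even, that polynomial is $\pi$-periodic and has only even frequencies, so the product contains only frequencies congruent to $k$ modulo $2$. Hence $\cos(kx)\cos x=\tfrac12(\cos((k+1)x)+\cos((k-1)x))$, whose frequencies have the opposite parity, is not a combination of such $n$-fold products. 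Your padding $\cos x=\cos x\cos^2x+\cos x\sin^2x$ adds two factors at a time and cannot fix the count. The same parity slip occurs when you say that pairing $e^{ix}$ with $e^{-ix}$ among $n-1$ factors leaves $e^{i(k+1)x}$. Your remark that $M\ge1$ would suffice is false, and this exposes the problem. If $\mathcal H_0=\Span\{1,\cos x,\sin x\}$, every $\phi'$ with $\phi\in\mathcal H_0$ satisfies $\phi'(x+\pi)=-\phi'(x)$, and odd powers preserve this property. By induction, $\mathcal H_\infty$ then contains only constants and odd frequencies, so it never reaches $\cos(2x)$.

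The missing idea is not to start from the top known frequency $M$. Your first computation was already the right one. The product $\cos(mx)\cos^{n-1}x$ of the derivatives of $\tfrac1m\sin(mx)$ and of $n-1$ copies of $\sin x$ has top mode $\cos((m+n-1)x)$ with nonzero coefficient, and for $m\ge1$ all its other modes have frequency at most $m+n-3$. The paper applies this with $m=m_0:=(M+1)-(n-1)$, which is $\ge1$ precisely because $M\ge n-1$; this is where the hypothesis that $\mathcal H_0$ contains all frequencies up to $n-1$ enters. The top mode is then exactly $M+1$, and every other mode has frequency at most $M-1$, so it already lies in $\mathcal H_j\subset\mathcal H_{j+1}$. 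Subtracting those modes gives $\cos((M+1)x)\in\mathcal H_{j+1}$. The same computation with $\sin(m_0x)$, the derivative of $-\tfrac1{m_0}\cos(m_0x)$, gives $\sin((M+1)x)$.
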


\begin{proof}
We show that $\mathcal H_\infty$ contains all trigonometric modes.
Since $\cos(mx),\sin(mx)\in\mathcal H_0\subset\mathcal H_\infty$ for all $0\le m\le n-1$,
it is enough to prove that $\cos(Nx)$, $\sin(Nx)\in \mathcal H_\infty$ for every $N\ge n$. It is immediate that $\mathcal H_k\subset\mathcal H_{k+1}$ for any $k\in\N$.

\medskip

Note that, for any $f_1,\dots,f_n\in C^\infty(\T;\R)$,
\[
f_1' f_2'\cdots f_n' \in \mathcal F_n(\Span\{f_1,\dots,f_n\}).
\]
Indeed, for any $u_1,\dots,u_n\in\R$, one has the identity
\[
 u_1\cdots u_n
=\frac{1}{n!}
\sum_{\varepsilon\in\{0,1\}^n}
(-1)^{n-|\varepsilon|}
\big(\varepsilon_1u_1+\cdots+\varepsilon_nu_n\big)^n.
\]
Applying it pointwise with $u_j=f_j'(x)$ for any $x\in\T$, and using that
$\big(\sum \varepsilon_j f'_j\big){}^n\in\mathcal F_n(\mathcal H)$ yields the claim. Now assume that
\[\cos(mx),\sin(mx)\in\mathcal H_\infty \quad \text{for all } 0\le m\le N-1,\]
for some $N\geq n.$ Let $m_0=N-(n-1)$, so $1\le m_0\le N-1$.
By the induction hypothesis, $\sin(m_0x),\cos(m_0x)\in\mathcal H_\infty$ and $\sin(x)\in\mathcal H_0\subset\mathcal H_\infty$.
Choose $k\in\N$ large enough so that
\[
\sin(m_0x),\cos(m_0x),\sin(x) \in \mathcal H_k.
\]
Let $g(x)=\sin(x)$, so $g'(x)=\cos(x)$. By the previous analysis, we obtain $f'(g')^{n-1}\in\mathcal H_{k+1}$ for every $f\in\mathcal H_k$.
By setting $f_1(x)=\tfrac{1}{m_0}\sin(m_0 x)$ and $f_2(x)=-\tfrac{1}{m_0}\cos(m_0x)$, we obtain
\[ f_1'(x)=\cos(m_0x),
\quad   f_2'(x)=\sin(m_0x).
\]
As a result,
\[f_1'(g')^{n-1}=\cos(m_0x)\cos^{n-1}(x) \in \mathcal H_{k+1}\subset\mathcal H_\infty,
\quad f_2'(g')^{n-1}=\sin(m_0x)\cos^{n-1}(x) \in \mathcal H_{k+1}\subset\mathcal H_\infty.\]
 Recall that for any $q\in\N$, we have the standard expansion,
\[
\cos^q (x) = 2^{-q}\sum_{r=0}^q \binom{q}{r}\cos((q-2r)x).
\]
With $q=n-1$, multiplying by $\cos(m_0x)$ and using standard trigonometric identities, we obtain
\[
\cos(m_0x)\cos^{n-1}x
=
2^{-n}\cos((m_0+n-1)x) + \sum_{\ell\le m_0+n-3} a_\ell \cos(\ell x),
\]
for some coefficients $a_\ell\in\R$.
Since $m_0+n-1=N$ and $m_0+n-3=N-2$, all modes in the sum have frequency at most $N-2$.
By the induction hypothesis, each $\cos(\ell x)$ with $\ell\le N-2$ belongs to $\mathcal H_\infty$.
Hence, $\cos(Nx)\in\mathcal H_\infty$.

\noindent Similarly,
\[\sin(m_0x)\cos^{n-1}x=2^{-n}\sin((m_0+n-1)x) + \sum_{\ell\le m_0+n-3} b_\ell \sin(\ell x),\]
and we also obtain $\sin(Nx)\in\mathcal H_\infty$. This proves the induction step, and thus $\sin(nx),\cos(nx)\in\mathcal H_\infty$ for any $n\ge0$.
\end{proof}
\section{Bilinear control of phases}
Throughout this section, for any initial datum $\psi_{0}$ and any control
$u(\cdot)$, we denote by $R_{t}(\psi_{0},u):=\psi(t,\psi_0,u)$
the value at time $t$ of the solution of \eqref{eq:lkdv-sch}.

\subsection{Saturation limit}
Let $\varphi \in C^{3}(\T;\R)$, $u\in\R^q$ and $\tau>0$. We define:
\[
H=i(\varphi^{\prime})^3+i(u\cdot Q).\]
\begin{proposition}
\label{prop: sat}
Let $\psi_0 \in L^2(\T)$, $\|\psi_0\|_{L^{2}(\T)}=1$, $u \in \R^q$, and 
$\varphi \in C^3(\T;\R)$. Then,
\[
e^{ i\tau^{-\frac{1}{3}}\varphi}R_{\tau}(e^{- i\tau^{-\frac{1}{3}}\varphi}\psi_0, \tau^{-1}u\big)
\longrightarrow
e^{H}\psi_0
\quad\text{in }  L^2(\T) \text{ as } \tau \to 0^+.\]
\end{proposition}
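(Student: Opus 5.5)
The plan is to conjugate the dynamics by the phase and compute the resulting generator. Set $\delta=\tau^{-1/3}$ and consider
\[
w(t):=e^{i\delta\varphi}\,R_t\bigl(e^{-i\delta\varphi}\psi_0,\tau^{-1}u\bigr),\qquad t\in[0,\tau].
\]
This $w$ solves the conjugated equation $\partial_t w = A_\delta w$, where
\[
A_\delta=e^{i\delta\varphi}\bigl(-\partial_x^3+i\alpha\partial_x^2\bigr)e^{-i\delta\varphi}+i\tau^{-1}(u\cdot Q).
\]
Using $e^{i\delta\varphi}\partial_x e^{-i\delta\varphi}=\partial_x-i\delta\varphi'$, we expand $-(\partial_x-i\delta\varphi')^3$. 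The leading zero-order term is $-(-i\delta\varphi')^3=-i\delta^3(\varphi')^3$. With the convention of the statement ($H=i(\varphi')^3+\dots$) I expect the sign to match after accounting for the sign convention of the phase; in any case the coefficient is $\tau^{-1}(\varphi')^3$ times a unimodular constant, to be checked carefully. The remaining terms are of orders $\delta^2$, $\delta$ and $\delta^0$. The $\delta^2$ terms are $3i\delta^2(\varphi')^2\partial_x$, plus $\delta^2$ times zero-order terms coming from $\varphi''$, plus the Schrödinger contribution $i\alpha(\partial_x-i\delta\varphi')^2$, whose zero-order part is $-i\alpha\delta^2(\varphi')^2$. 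There are also lower terms involving $\delta\varphi'''$, which is why $\varphi\in C^3$ is needed.

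Next, rescale time by $t=\tau s$, $s\in[0,1]$, and set $v(s)=w(\tau s)$. Then
\[
\partial_s v=\Bigl(H + \tau^{1/3}B_1+\tau^{2/3}B_2+\tau B_0\Bigr)v,
\]
where, modulo the sign bookkeeping, $H=i(\varphi')^3+i(u\cdot Q)$ is a bounded skew-adjoint multiplication operator. Here $B_1=3i(\varphi')^2\partial_x+(\text{zero order})$ is first order, $B_2$ is second order, and $B_0=-\partial_x^3+i\alpha\partial_x^2$. Since $w(\tau)=v(1)$, the claim reduces to $v(1)\to e^{H}\psi_0$ in $L^2$.

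To prove this, I first argue by density: all the evolutions are $L^2$-isometries. The original flow is unitary because the control is purely imaginary, and conjugation by a unimodular multiplier preserves this. Hence it suffices to take $\psi_0\in H^3(\T)$, and by a further approximation $\psi_0\in C^\infty$, if convenient. For such data I compare $v$ with $z(s)=e^{sH}\psi_0$ using Duhamel with respect to the true propagator $U_\tau(s,s')$, which is unitary:
\[
v(1)-z(1)=\int_0^1 U_\tau(1,s)\bigl(\tau^{1/3}B_1+\tau^{2/3}B_2+\tau B_0\bigr)z(s)\,ds.
\]
This gives
\[
\|v(1)-z(1)\|_{L^2}\lesssim \tau^{1/3}\sup_{s\in[0,1]}\|z(s)\|_{H^3},
\]
and the estimate only needs $B_j z(s)\in L^2$, with no uniform bounds on $v$.

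The main obstacle is regularity. $z(s)=e^{sH}\psi_0$ has $H^3$ norm controlled by $\|\psi_0\|_{H^3}$ together with derivatives of $(\varphi')^3$ and $Q$ up to order three. With $\varphi$ only $C^3$, $(\varphi')^3$ is merely $C^2$, so $B_0 z$ may fail to lie in $L^2$. To handle this, I will also approximate $\varphi$ by smooth $\varphi_k\to\varphi$ in $C^3$. The difference between the conjugated problems for $\varphi$ and $\varphi_k$ is not uniformly small in $\tau$, because the phase $\delta(\varphi-\varphi_k)$ blows up. So instead I would keep $\varphi$ but mollify only in the comparison: replace $z$ by $z_k=e^{sH_k}\psi_0$ with $H_k$ built from a smooth $\varphi_k$, and bound the additional generator mismatch $\|(H-H_k)z_k\|\le\|(\varphi')^3-(\varphi_k')^3\|_\infty$. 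The terms $B_j z_k$ then involve only derivatives of $\varphi$ up to third order, which are allowed, together with high derivatives of $z_k$, which are bounded in terms of $k$. Choosing first $k$ large and then $\tau$ small concludes, with the unitarity of $U_\tau$ preventing any loss. Finally, I will double-check that the domains in the Duhamel formula are justified, using the $H^s$ well-posedness theorem above applied to the smooth-coefficient approximations.
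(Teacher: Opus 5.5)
Your argument is correct. Its core is the same as the paper's: conjugate by the phase, rescale time by $\tau$, write the generator as $H+\tau^{1/3}B_1+\tau^{2/3}B_2+\tau B_0$, and compare with $e^{sH}$ using skew-adjointness. Your Duhamel formula with the unitary propagator is just the integrated form of the paper's energy identity for $v=\phi(\tau\,\cdot)-w$.

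The real difference is how regularity is handled.
\begin{itemize}
\item The paper compares with $w(t)=e^{tH}\psi_0^\tau$, where $\psi_0^\tau=e^{\tau^{1/4}\partial_x^2}\psi_0$. Each derivative then costs only $\tau^{-1/8}$, and every error term is bounded by $\tau^{5/24}\|\psi_0\|_{L^2}$, which gives an explicit rate. However, the bound $\|\partial_x^3 w(t)\|_{L^2}\lesssim\|\psi_0^\tau\|_{H^3}$ used there tacitly needs $(\varphi')^3\in W^{3,\infty}$, i.e.\ more than $\varphi\in C^3$. This is harmless in the paper, since the proposition is only applied to smooth $\varphi$.
\item Your scheme uses fixed $H^3$ data plus density via unitarity, and mollifies $\varphi$ only in the comparison flow $e^{sH_k}$ while keeping the true $\varphi$ in the dynamics. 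It gives no rate, but it works exactly under the stated $C^3$ hypothesis. You identify the right reason for this design: $\tau^{-1/3}(\varphi-\varphi_k)$ blows up, so one cannot mollify inside the dynamics.
\end{itemize}

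The domain check you flag needs no approximation. The multiplier $e^{\pm i\tau^{-1/3}\varphi}\in C^3$ preserves $H^3(\T)$, so the conjugated generator is skew-adjoint with domain $H^3(\T)$. Then $z_k\in C([0,1];H^3)\cap C^1([0,1];L^2)$ is enough for Duhamel. The closing step $\|(e^{H_k}-e^{H})\psi_0\|_{L^2}\le\|(\varphi_k')^3-(\varphi')^3\|_{L^\infty}\|\psi_0\|_{L^2}$ is immediate.

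Two computational remarks.
\begin{itemize}
\item Your sign $-i\delta^3(\varphi')^3$ is right: it is the symbol $i\xi^3$ of $-\partial_x^3$ evaluated at the frequency $\xi=-\delta\varphi'$ carried by $e^{-i\delta\varphi}$. So the limit is really $e^{-i(\varphi')^3+i(u\cdot Q)}\psi_0$. The $+$ in $H$ is a sign slip in the statement, not something a convention will reconcile; the paper's own computation in the proof of \Cref{PropTransportSTAR} produces $-i(\varphi')^3$. It is harmless downstream, because replacing $\varphi$ by $-\varphi$ flips the sign and $\mathcal A$ is a vector space.
\item The $\delta^2$ first-order term is $3\delta^2(\varphi')^2\partial_x$ with a real coefficient, not $3i\delta^2(\varphi')^2\partial_x$. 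Together with the zero-order term $3\delta^2\varphi'\varphi''$, it is the skew-adjoint transport operator $\delta^2T_{3(\varphi')^2}$. Your estimate only uses that $B_1$ is first order with bounded coefficients, so nothing breaks.
\end{itemize}
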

\begin{proof}
To justify the integrations by parts, we first assume that $\psi_0\in H^3(\T)$. In this case, the corresponding solution is sufficiently regular and all the following computations are rigorous. The general case $\psi_0\in L^2(\T)$ then follows by a standard density argument. For $\psi_0\in H^3(\T),$ we moreover consider a regularized version of $\psi_0$, given by an appropriate scaling of the action of heat semigroup
\begin{equation*}
\psi_{0}^{\tau} := e^{\tau^{1/4}\partial_x^2}\psi_0,\quad \tau>0.
\end{equation*}
 Using regularizing properties of the heat equation, for any $m\in\R$ and $s\leq 3$, there exists $C_{s,m}>0$ such that
\begin{equation}
\label{eq: w reg est}
\|\psi_{0}^{\tau}\|_{H^{s+m}(\T)}  \leq C_{s,m} \tau^{-m/8}\|\psi_0\|_{H^s(\T)}.\end{equation}
In addition,
\[\psi_{0}^{\tau} \longrightarrow \psi_0 \quad \text{in } L^2(\T), \quad \text{as } \tau \to 0^+.\] 
We set, for $\tau>0$,
\begin{equation*}
\phi(t)=e^{ i\tau^{-\frac{1}{3}}}R_{t}(e^{ -i\tau^{-\frac{1}{3}}}\psi_0, \tau^{-1}u\big), \quad w(t)=e^{Ht}\psi_{0}^{\tau},
\end{equation*}
and $v(t)=\phi(\tau t)-w(t)$. In this setting, proving \Cref{prop: sat} reduces $\|v(1)\|_{L^{2}(\T)} \to 0$ as $\tau \to 0^+$. Then, $v$ satisfies

\[
\begin{cases}
\partial_t v-\tau e^{ i\tau^{-\frac{1}{3}}\varphi}Le^{- i\tau^{-\frac{1}{3}}\varphi}v-i(u\cdot Q) v =\mathcal{F}_\tau w& \text{ in }\R_+\times\T\\
 v(t=0) = \psi_0-\psi_0^\tau & \text{ in }\T,
 \end{cases}  
 \]
 where 
 \begin{align*}
 \mathcal F_\tau w :=& e^{ i\tau^{-\frac{1}{3}}\varphi}Le^{- i\tau^{-\frac{1}{3}}\varphi}w+i(u\cdot Q)w-Hw  \\
 =&-\tau\partial_x^3w+a(x)\partial_x^2w+b(x)\partial_xw+c(x)w,
 \end{align*}
 with
$$
a=3i\tau^{\frac{2}{3}}\varphi^{\prime}+i\tau\alpha,\quad  b=3i\tau^{\frac{2}{3}}\varphi^{\prime\prime}+3\tau^{\frac{1}{3}}(\varphi^{\prime})^2+2\tau^{\frac{2}{3}}\varphi^{\prime},$$
$$c=i\tau^{\frac{2}{3}}\varphi^{\prime\prime\prime}+3\tau^{\frac{1}{3}}\varphi^{\prime}\varphi^{\prime\prime}-i\tau^{\frac{1}{3}}(\varphi^{\prime})^2 + \tau^{\frac{2}{3}}\varphi^{\prime\prime}.$$
As $e^{ i\tau^{-\frac{1}{3}}}Le^{ -i\tau^{-\frac{1}{3}}}$ is skew-adjoint, we have
\[\tfrac12\tfrac{d}{dt}\|v(t)\|_{L^2}^2
=\Re\langle F_\tau w(t),v(t)\rangle.\]
By recalling \eqref{eq: w reg est}, 
\[\|\partial_x^{k} w(t)\|_{L^{2}} \lesssim\|\psi_{0}^{\tau}\|_{H^{k}}
\lesssim\tau^{-k/8}\|\psi_{0}\|_{L^{2}}, \quad k\in\{0,1,2,3\}.\]
Therefore,
\[\big|\Re\int_{\T} \tau\partial_x^{3} w \;dx
\big|
\lesssim \tau^{5/8} \|\psi_{0}\|_{L^{2}}\|v\|_{L^{2}}.\]
With the same argument, using $\|a\|_{\infty}=O(\tau^{2/3})$, $\|b\|_{\infty},\|c\|_{\infty}=O(\tau^{1/3})$, we obtain for $\tau>0$ sufficiently small
\[\big|\Re\int_{\T} \mathcal F_{\tau}w\overline v \; dx
\big|\lesssim\tau^{5/24}\|\psi_{0}\|_{L^{2}} \|v\|_{L^{2}}.\]
 Thus $\tfrac{d}{dt}\|v(t)\|_{L^{2}}^{2}
\lesssim
\tau^{5/24}\|\psi_{0}\|_{L^{2}}\|v(t)\|_{L^{2}}$, which yields
\[\|v(1)\|_{L^{2}}
\le\|v(0)\|_{L^{2}} + C\tau^{5/24}\|\psi_{0}\|_{L^{2}}.\]
Since $v(0)=\psi_{0}-\psi_{0}^{\tau}\to0$ in $L^{2}$ as $\tau\to0$, we conclude
\[
\|v(1)\|_{L^{2}}\longrightarrow 0\text{ as }\tau\to0^{+}.\]
\end{proof}
\begin{definition}[Admissible phases]\label{AdmissiblePhasesDef}
 We define by $\mathcal A$ the set of phases $\theta \in L^{2}(\T;\R)$ such that,
for every $\psi_{0}\in L^{2}(\T)$, $T>0$ and $\varepsilon>0$,
there exist $\tau\in(0,T)$ and a control $u\in \PWC(0,\tau;\R^q)$ such that
\begin{equation}
\label{eq:defA}
\|R_{\tau}(\psi_{0},u) - e^{i\theta}\psi_{0}\|_{L^{2}}\le
\varepsilon.
\end{equation}   
\end{definition}

From \cref{prop: sat} with $\varphi=0$, we immediately obtain
$\operatorname{span}\{Q_{0},\dots,Q_{q-1}\}\subset \mathcal A$. Moreover, if
$\theta_{1},\theta_{2}\in\mathcal A$ and $c\in\R$, then
$c\theta_1$ and $\theta_{1}+\theta_{2}$ belong to $\mathcal A$, by re-scaling the control
and by concatenating the corresponding control laws. More precisely, given
$u_1:[0,T_1]\to\R^q$ and $u_2:[0,T_2]\to\R^q$, we define the concatenated control
$u_1\diamond u_2:[0,T_1+T_2]\to\R^q$ by
\[
(u_1\diamond u_2)(t)=
\begin{cases}
u_1(t) & t\in[0,T_1],\\
u_2(t-T_1) & t\in(T_1,T_1+T_2].
\end{cases}
\]

Assume that the solution map $R_t(\psi_0,u)$ is well-defined on $[0,T]$ for $u=u_1\diamond u_2$,
for some $T\in(T_1,T_1+T_2]$. Then, for every $t\in(0,T-T_1)$, one has the flow identity
\[R_{T_1+t}(\psi_0,u_1\diamond u_2)=R_t\left(R_{T_1}(\psi_0,u_1),u_2\right).\]

\begin{proposition}
\label{prop:A-cubic}
Let $\varphi \in C^{3}(\T;\R)$ and assume that 
$\operatorname{span}\{\varphi\}\subset \mathcal A$. 
Then the phase $(\varphi')^{3}$ also belongs to $\mathcal A$.
\end{proposition}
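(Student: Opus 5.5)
The plan is to realise $e^{i(\varphi')^3}$ as a composition of three approximate operations: a phase $e^{-i\tau^{-1/3}\varphi}$, a short free evolution with zero control, and the phase $e^{i\tau^{-1/3}\varphi}$. Each step is approximately reachable, and the limit is supplied by \Cref{prop: sat}.

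We apply \Cref{prop: sat} with $u=0$. Then $H=i(\varphi')^3$, and for every $\psi_0\in L^2(\T)$ (by linearity the normalisation is irrelevant),
\[
e^{i\tau^{-1/3}\varphi}R_\tau\big(e^{-i\tau^{-1/3}\varphi}\psi_0,0\big)\longrightarrow e^{i(\varphi')^3}\psi_0 \quad\text{in }L^2(\T)\text{ as }\tau\to0^+ .
\]
Given $\psi_0$, $T>0$ and $\varepsilon>0$, we first fix $\tau<T/3$ small enough that this quantity is $\varepsilon/3$-close to $e^{i(\varphi')^3}\psi_0$.

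Since $\Span\{\varphi\}\subset\mathcal A$, the phase $-\tau^{-1/3}\varphi$ is admissible. Hence there exist a time $\tau_1<T/3$ and a control $u_1$ with $\|R_{\tau_1}(\psi_0,u_1)-e^{-i\tau^{-1/3}\varphi}\psi_0\|_{L^2}\le\varepsilon/3$. We then concatenate with the zero control on $[0,\tau]$. Because $R_\tau(\cdot,0)=e^{\tau L}$ is unitary and linear, the error is carried along without amplification:
\[
\big\|R_{\tau_1+\tau}(\psi_0,u_1\diamond 0)-R_\tau\big(e^{-i\tau^{-1/3}\varphi}\psi_0,0\big)\big\|_{L^2}\le\varepsilon/3 .
\]

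Next, set $\chi:=R_\tau(e^{-i\tau^{-1/3}\varphi}\psi_0,0)$, which is a fixed $L^2$ function once $\tau$ is fixed. The phase $\tau^{-1/3}\varphi$ is admissible, so there exist $\tau_2<T/3$ and $u_2$ with $\|R_{\tau_2}(\chi,u_2)-e^{i\tau^{-1/3}\varphi}\chi\|\le\varepsilon/3$. The system is linear and, for a fixed control, $L^2$-conservative: $R_{\tau_2}(\cdot,u_2)$ is unitary, since the control term is purely imaginary and $L$ is skew-adjoint. Therefore applying $u_2$ to the actual state, which is $\varepsilon/3$-close to $\chi$, changes the outcome by at most $\varepsilon/3$. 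By the flow identity, the concatenated control $u_1\diamond 0\diamond u_2$ on $[0,\tau_1+\tau+\tau_2]\subset[0,T)$ produces a state within $\varepsilon/3+\varepsilon/3+\varepsilon/3=\varepsilon$ of $e^{i(\varphi')^3}\psi_0$. Replacing $\varepsilon$ by a smaller constant if strict inequalities are needed, this gives $(\varphi')^3\in\mathcal A$.

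The main point needing care is the order of the choices. The tolerance for the second phase step has to be chosen after $\tau$ and $\chi$ are fixed, because admissibility is a pointwise statement in $\psi_0$ and is not uniform. The uniform control on error propagation then comes entirely from unitarity of the flows, which holds for every fixed piecewise-constant control. A second point is that \Cref{prop: sat} is invoked with $u=0$, so the large rescaled control $\tau^{-1}u$ plays no role here.
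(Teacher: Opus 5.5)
Your proposal is correct and follows the paper's proof essentially step by step. You fix the saturation time via \Cref{prop: sat} with $u=0$, approximate the phase $e^{-i\tau^{-1/3}\varphi}$, run the free flow, and then approximate $e^{i\tau^{-1/3}\varphi}$ on the fixed intermediate state, splitting the error into three $\varepsilon/3$ pieces. You also make explicit two points the paper leaves implicit: the normalisation $\|\psi_0\|_{L^2}=1$ in \Cref{prop: sat} is harmless by linearity, and carrying the error through the last step relies on unitarity of $R_{\tau_2}(\cdot,u_2)$.
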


\begin{proof}
Let $\psi_{0}\in L^{2}(\T)$, $T>0$, and $\varepsilon\in(0,1)$.  
By \cref{prop: sat}, there exists $\tau_{1}\in(0,T/3)$ such that
\begin{equation*}
\label{eq:step1}
\bigl\|e^{i\varphi\tau_{1}^{-1/3}}
R_{\tau_{1}}\bigl(e^{-i\varphi\tau_{1}^{-1/3}}\psi_{0},0\bigr)-e^{i(\varphi')^{3}}\psi_{0}
\bigr\|_{L^{2}}\le\frac{\varepsilon}{3}.
\end{equation*}
Set $\psi_{1}:=R_{\tau_{1}}\bigl(e^{-i\varphi\tau_{1}^{-1/3}}\psi_{0},0\bigr)\in L^{2}(\T)$, so that
\begin{equation}
\label{eq:step2}
\bigl\|e^{i\varphi\tau_{1}^{-1/3}}\psi_{1}-e^{i(\varphi')^{3}}\psi_{0}
\bigr\|_{L^{2}}\le\frac{\varepsilon}{3}.
\end{equation}
As $\varphi\tau_{1}^{-1/3}\in\mathcal A$, there exist $\tau_{2}\in(0,T/3)$ and a control
$u_{2}\in \PWC(0,\tau_{2};\R^q)$ such that
\begin{equation}
\label{eq:step3}
\bigl\|R_{\tau_{2}}(\psi_{1},u_{2})-e^{i\varphi\tau_{1}^{-1/3}}\psi_{1}\bigr\|_{L^{2}}
\le\frac{\varepsilon}{3}.
\end{equation}
Moreover, as $-\varphi\tau_{1}^{-1/3}\in\mathcal A$, there exist $\tau_{0}\in(0,T/3)$ and a control
$u_{0}\in \PWC(0,\tau_{0};\R^q)$ such that
\begin{equation}
\label{eq:step4}
\bigl\|R_{\tau_{0}}(\psi_{0},u_{0})-e^{-i\varphi\tau_{1}^{-1/3}}\psi_{0}\bigr\|_{L^{2}}
\le \frac{\varepsilon}{3}.
\end{equation}
By the $L^2$-isometry of the free flow $R_t(\cdot,0)=e^{tL}$ and  by \eqref{eq:step4}, we obtain
\begin{equation}\label{eq:step5}\begin{aligned}
\bigl\|R_{\tau_{0}+\tau_{1}}(\psi_{0},u_{0}\diamond\mathbf 0)-\psi_{1}\bigr\|_{L^{2}}&=\bigl\|e^{\tau_1 L}(R_{\tau_{0}}(\psi_{0},u_{0})-e^{i\varphi\tau_{1}^{-1/3}}\psi_{0})\bigr\|_{L^{2}}\\
&=\bigl\|R_{\tau_{0}}(\psi_{0},u_{0})-e^{i\varphi\tau_{1}^{-1/3}}\psi_{0}\bigr\|_{L^{2}}\\
&\leq \frac{\varepsilon}{3},
\end{aligned}\end{equation}
where $\mathbf 0$ denotes the identically zero control on $(0,\tau_1)$. Define the concatenated control $u:=u_0\diamond(\mathbf 0\diamond u_2)$.  
Using \eqref{eq:step3} and \eqref{eq:step5}, we obtain
\[\bigl\|R_{\tau_0+\tau_1+\tau_2}(\psi_0,u)-e^{i\varphi\tau_{1}^{-1/3}}\psi_{1}\bigr\|_{L^{2}}
\le\frac{2\varepsilon}{3},\]
combining with \eqref{eq:step2} yields
\[\bigl\|R_{\tau_{0}+\tau_{1}+\tau_{2}}(\psi_{0},u)-e^{i(\varphi')^{3}}\psi_{0}\bigr\|_{L^{2}}
\le\varepsilon.\]
Since $\tau_{0}+\tau_{1}+\tau_{2}<T$, this proves that $(\varphi')^{3}\in\mathcal A$.
\end{proof}

Now, we state that $H^s(\T;\R)\subset\mathcal{A}$, for all $s\geq0$, which can be rewritten as the following \textit{small-time approximate controllability of phases.}
\begin{theorem}[Small-time $L^2$ control of phases]
\label{th: phases}
For every $\psi_0 \in L^2(\T)$, $\eps>0$ and every phase 
$\theta\in C^{\infty}(\T;\R)$, there exists a time $T\in(0,\eps)$ and a control $u\in \PWC(0,T;\R^q)$ such that the corresponding solution of \eqref{eq:lkdv-sch} satisfies
\[\|\psi(T)-e^{i\theta}\psi_{0}\|_{L^{2}} < \varepsilon .\]
\end{theorem}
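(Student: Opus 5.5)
Our plan is to show that $\mathcal H_\infty\subset\mathcal A$, where $\mathcal H_0$ is the saturating space of \cref{prop:saturating_general_power} with $n=3$. Then we pass to the closure. The argument has three steps.

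\textbf{Step 1 (base case).} For $n=3$ we have
\[\mathcal H_0=\Span\{1,\cos x,\sin x,\cos 2x,\sin 2x\}.\]
By \eqref{eq: controls} this is contained in $\Span\{Q_0,\dots,Q_{q-1}\}$, which lies in $\mathcal A$ by \cref{prop: sat} with $\varphi=0$. So $\mathcal H_0\subset\mathcal A$.

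\textbf{Step 2 (induction).} We assume that $\mathcal H_j\subset\mathcal A$. Every $\phi\in\mathcal H_j$ is smooth, and $\Span\{\phi\}\subset\mathcal H_j\subset\mathcal A$. \cref{prop:A-cubic} then gives $(\phi')^3\in\mathcal A$. Since $n=3$ is odd, we have
\[\mathcal F_3(\mathcal H_j)=\mathcal H_j+\Span\{(\phi')^3:\phi\in\mathcal H_j\}.\]
Closure of $\mathcal A$ under real scalar multiples and sums (rescaling and concatenating controls) then yields $\mathcal H_{j+1}\subset\mathcal A$. Hence $\mathcal H_\infty\subset\mathcal A$.

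\textbf{Step 3 (closure).} Fix $\theta\in C^\infty(\T;\R)$, $\psi_0\in L^2$ and $\eps>0$. We need to pass from phases in $\mathcal H_\infty$ to $\theta$. The pointwise bound $|e^{i\theta}-e^{i\tilde\theta}|\le|\theta-\tilde\theta|$ alone is not sufficient for general $\psi_0\in L^2$, because $L^2$ closeness of the phases does not control $\|(e^{i\theta}-e^{i\tilde\theta})\psi_0\|_{L^2}$. Instead we use density in $H^s$ with $s>1/2$. \cref{prop:saturating_general_power} gives $\tilde\theta\in\mathcal H_\infty$ with $\|\theta-\tilde\theta\|_{H^1}$ small. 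By the Sobolev embedding $H^1(\T)\hookrightarrow L^\infty(\T)$, this gives $\|\theta-\tilde\theta\|_{L^\infty}<\eps/(2\|\psi_0\|_{L^2}+1)$. Consequently
\[\|e^{i\tilde\theta}\psi_0-e^{i\theta}\psi_0\|_{L^2}\le\|\theta-\tilde\theta\|_{L^\infty}\|\psi_0\|_{L^2}<\eps/2.\]
Since $\tilde\theta\in\mathcal A$, the definition of $\mathcal A$ with $T=\eps$ gives $\tau\in(0,\eps)$ and $u\in\PWC(0,\tau;\R^q)$ with $\|R_\tau(\psi_0,u)-e^{i\tilde\theta}\psi_0\|_{L^2}<\eps/2$. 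The triangle inequality concludes the proof.

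\textbf{Main obstacle.} The delicate point is Step 3: one must use uniform rather than $L^2$ approximation of the phase. This is why the density in $H^s$ for all $s$, and not only in $L^2$, matters in the definition of saturating. A secondary point is that $\mathcal A$ is closed under sums. This needs the time bound $T$ to be split between the pieces, and it uses that the $L^2$ error is stable under the isometric flow and under multiplication by unimodular phases. These arguments are the same as in the proof of \cref{prop:A-cubic}.
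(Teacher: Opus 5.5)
Your proposal is correct and follows the paper's proof. Both start from $\Span\{Q_0,\dots,Q_{q-1}\}\subset\mathcal A$, iterate \cref{prop:A-cubic} together with the closure of $\mathcal A$ under linear combinations to get $\mathcal H_\infty\subset\mathcal A$, and then conclude with the density from \cref{prop:saturating_general_power} and a triangle inequality. The one deviation is the final step. You approximate $\theta$ uniformly via $H^1(\T)\hookrightarrow L^\infty(\T)$, whereas the paper uses that $\phi\mapsto e^{i\phi}\psi_0$ is continuous from $L^2$ to $L^2$ for fixed $\psi_0$. That continuity does hold, by dominated convergence along a.e.\ convergent subsequences with $|e^{i\phi_n}-e^{i\phi}|^2|\psi_0|^2\le 4|\psi_0|^2$. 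So your claim that $L^2$ approximation of the phase cannot work overstates the difficulty, though your quantitative route is perfectly valid.
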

\begin{proof}
By \Cref{prop:A-cubic}, the vectorial space generated by iterating the operation $\varphi\mapsto(\varphi')^{3}$ starting from 
$\operatorname{span}\{Q_{0},\dots,Q_{q-1}\}$ is contained in $\mathcal A$.  
By assumption \eqref{eq: controls} and \Cref{prop:saturating_general_power}, this set is dense in $H^{s}(\T)$. Since the mapping $L^{2}(\T)\ni \phi \longmapsto e^{i\phi}\psi_{0}\in L^{2}(\T)$ is continuous,  we can choose $\phi_{\varepsilon}\in\mathcal A$ such that
\[\| e^{i\theta}\psi_{0} - e^{i\phi_{\varepsilon}}\psi_{0}\|_{L^{2}}
\le\frac{\varepsilon}{2}.\]
As $\phi_{\varepsilon}\in\mathcal A$, there exist
$T\in(0,\eps)$ and $u\in \PWC(0,T;\R^q)$ (see \cref{AdmissiblePhasesDef}) such that
\[\| R_{T}(\psi_{0},u) - e^{i\phi_{\varepsilon}}\psi_{0}\|_{L^{2}}
\le\frac{\varepsilon}{2}.\]
Combining the two estimates yields $\| R_{T}(\psi_{0},u) - e^{i\theta}\psi_{0}\|_{L^{2}}< \varepsilon$, which completes the proof. 
\end{proof}
\section{Control of flows}
We now combine the small-time control of phases with the free dynamics generated by
\[L=-\partial_x^3+i\alpha\partial_x^2\]
in order to produce, in small time, unitary operators associated with transport flows on $\T$.
This part is strongly inspired by the techniques developed in \cite{BeauchardPozzoli2025AIHPC}.

\begin{definition}A unitary operator $S$ on $L^2(\T)$ is $L^2$--STAR if, for every $\psi_0 \in L^{2}(\T)$ with $\|\psi_0\|_{L^{2}(\T)}=1$ and $\eps>0$, there exists $T\in[0,\eps]$, $\beta \in [0,2\pi)$ and $u \in \PWC(0,T;\R^q)$ such that
\[\|R_{T}(\psi_0,u)-e^{i\beta}S\psi_0\|_{L^{2}(\T)}<\eps.\]
\end{definition}
\begin{definition}
 We denote by $\mathrm{Diff}^0(\mathbb T)$ the group of $C^1$ diffeomorphisms of
$\mathbb T$ isotopic to the identity (equivalently, orientation-preserving $C^1$ diffeomorphisms). For $P\in \mathrm{Diff}^0(\mathbb T)$, set $J_P(x):=P'(x)>0$ and define
\[(U_P\psi)(x):=\sqrt{J_P(x)}\psi(P(x))=\sqrt{P'(x)}\psi(P(x)),\quad \psi\in L^2(\mathbb T;\mathbb C).\]

We also denote by  $\mathrm{Vec}(\T)$ the space of smooth real
vector fields on $\T$.
\end{definition}

For every $P\in \mathrm{Diff}^0(\mathbb T)$, the operator $U_P$ is unitary on $L^2(\mathbb T)$, in particular $\|U_P\psi\|_{L^2}=\|\psi\|_{L^2}$. Each $X_f\in \mathrm{Vec}(\T)$ can be identified by its coefficient $f\in W^{1,\infty}(\mathbb T;\mathbb R)$. Therefore, we use this identification along this part. For $f\in W^{1,\infty}(\mathbb T;\mathbb R)$, let $\phi_f^t\in\mathrm{Diff}^0(\mathbb T)$ be defined by
\[
\begin{cases}
\partial_t\phi_f^t(x)=f(\phi_f^t(x))&(t,x)\in\R\times \T,\\
\phi_f^0(x)=x&x\in \T.
\end{cases}
\]
Define the transport operator
\[T_f:=f\partial_x+\frac12 f',
\quad D(T_f):=\big\{\psi\in L^2(\mathbb T),\;  f\partial_x\psi\in L^2(\mathbb T)\big\}.
\]
Then $T_f$ generates a unitary group $(e^{tT_f})_{t\in\mathbb R}$ on $L^2(\mathbb T;\mathbb C)$ and $e^{tT_f}=U_{\phi_f^t}$, for $t\in\mathbb R.$ In particular, for every $t\in\mathbb R$ and $\psi\in L^2(\mathbb T)$,
\[
(e^{tT_f}\psi)(x)
=\psi(\phi_f^t(x))\exp\!\Big(\tfrac12\int_0^t f'(\phi_f^s(x))\,ds\Big)
=\sqrt{\partial_x\phi_f^t(x)}\,\psi(\phi_f^t(x)).
\]

\subsection{A small-time generation of transports}
In the previous section, we proved that for every phase $\theta \in L^{2}(\T;\R)$, the unitary multiplication operator \[M_\theta: \psi \mapsto e^{i\theta}\psi\] 
is $L^2$--STAR. We combine controllability for phases with the free group $e^{tL}$, to produce in small-time unitary operators associated with flows/transport. For $\tau>0$, $n\in \N$, $\varphi,g_\tau\in C^{\infty}(\T;\R)$, we define

\begin{equation}
\label{eq: Wtaun}
W_{\tau,n}=\big(e^{\frac{ig_\tau}{n\sqrt{\tau}}}e^{\frac{i\varphi}{\sqrt{\tau}}}e^{\frac{\tau L}{n}}e^{-\frac{i\varphi}{\sqrt{\tau}}}\big)^{n}   .
\end{equation}
Observe that $W_{\tau,n}$ is a composition of unitary operators: phase shifts (which are $L^2$--STAR by the previous section) and  the uncontrolled dynamic $e^{\frac{\tau L}{n}}$ in short time. In this direction, the following lemma ensures reachability properties of the compositions and limit of $L^2$--STAR operators.
\begin{lemma}[Lemma 6, \cite{BeauchardPozzoli2025AIHPC}]
\label{lem:closure-L2STAR}
The class of $L^2$--\emph{STAR} operators is stable under composition and under strong operator limits.
More precisely:
\begin{enumerate}
\item If $S_1$ and $S_2$ are $L^2$--\emph{STAR}, then $S_2S_1$ is $L^2$--\emph{STAR}.
\item If $(S_n)_{n\ge1}$ is a sequence of $L^2$--\emph{STAR} unitary operators and $S_n \to S$
strongly on $L^2(\T)$, then $S$ is $L^2$--\emph{STAR}.
\end{enumerate}
\end{lemma}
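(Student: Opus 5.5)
Both parts follow directly from the definition of $L^2$--STAR, using concatenation of controls, the flow identity, and the fact that every solution map $R_t(\cdot,u)$ is an $L^2$-isometry. That isometry holds because the generator $L+i(u\cdot Q)$ is skew-adjoint for piecewise constant $u$. Its useful consequence is that $R_t(\cdot,u)$ is linear and $1$-Lipschitz, so errors made at an earlier stage are transported without amplification.

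\emph{Composition.} Let $\psi_0$ with $\|\psi_0\|_{L^2}=1$ and $\eps>0$ be given. Since $S_1$ is $L^2$--STAR, I choose $T_1\le\eps/2$, $\beta_1$ and $u_1$ with $\|R_{T_1}(\psi_0,u_1)-e^{i\beta_1}S_1\psi_0\|<\eps/2$. The vector $S_1\psi_0$ has unit norm because $S_1$ is unitary, so applying the definition to $S_2$ with initial state $S_1\psi_0$ gives $T_2\le\eps/2$, $\beta_2$ and $u_2$ with $\|R_{T_2}(S_1\psi_0,u_2)-e^{i\beta_2}S_2S_1\psi_0\|<\eps/2$. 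I then run the concatenated control $u_1\diamond u_2$ from $\psi_0$. By the flow identity and linearity,
\[
R_{T_1+T_2}(\psi_0,u_1\diamond u_2)-e^{i(\beta_1+\beta_2)}S_2S_1\psi_0
= R_{T_2}\bigl(R_{T_1}(\psi_0,u_1)-e^{i\beta_1}S_1\psi_0,\,u_2\bigr)
+ e^{i\beta_1}\bigl(R_{T_2}(S_1\psi_0,u_2)-e^{i\beta_2}S_2S_1\psi_0\bigr).
\]
By the isometry, each term has norm less than $\eps/2$, so the total error is below $\eps$. The total time is $T_1+T_2\le\eps$, and the phase $\beta_1+\beta_2$ is reduced modulo $2\pi$.

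\emph{Strong limits.} Given $\psi_0$ and $\eps$, strong convergence lets me pick $n$ with $\|S_n\psi_0-S\psi_0\|<\eps/2$. Since $S_n$ is $L^2$--STAR, there exist $T\le\eps$, $\beta$ and $u$ with $\|R_T(\psi_0,u)-e^{i\beta}S_n\psi_0\|<\eps/2$. The triangle inequality then gives $\|R_T(\psi_0,u)-e^{i\beta}S\psi_0\|<\eps$. The limit $S$ is unitary by hypothesis, since the statement considers unitary operators.

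The only delicate point is the composition step, and it has two parts. First, the second approximation must be applied to the exact state $S_1\psi_0$, not to the perturbed state $R_{T_1}(\psi_0,u_1)$. Second, the resulting discrepancy must be controlled through the linear isometric structure of $R_{T_2}(\cdot,u_2)$; here we use that the equation is linear with purely imaginary bilinear term, which makes every $R_t(\cdot,u)$ unitary. The unit-norm requirement on the intermediate state is what requires $S_1$ to be unitary.
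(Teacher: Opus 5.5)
Your proof is correct and follows essentially the same route as the standard proof: the paper does not reprove this lemma but quotes it from \cite[Lemma~6]{BeauchardPozzoli2025AIHPC}, where it is proved by concatenating controls, using linearity and the $L^2$-isometry of $R_{T_2}(\cdot,u_2)$ to carry the first-stage error (including the phase $e^{i\beta_1}$), and using a triangle inequality for the strong limit. You are also right to treat unitarity of $S$ as part of the hypothesis rather than something to prove, since a strong limit of unitary operators is in general only an isometry.
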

We also recall a specific conjugation identity and the Trotter--Kato product formula.
\begin{proposition}[Proposition 11, \cite{beauchard2025examples}]
\label{prop: conj}
Let $A,B$ be self-adjoint operators on a Hilbert space $H$, and suppose that $e^{iB}$ is an isomorphism of $\mathcal D$, where $\mathcal D$ is a core for $A$. Then, for any $t\in\mathbb R$,
\[ e^{iB}e^{itA}e^{-iB}=\exp\bigl(e^{iB}(itA)e^{-iB}\bigr)
=\exp\bigl(ite^{iB}Ae^{-iB}\bigr).\]
\end{proposition}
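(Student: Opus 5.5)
We prove the conjugation identity first on the core $\mathcal D$ and then extend it by density, using uniqueness of the strongly continuous unitary group attached to a self-adjoint generator. Set $U:=e^{iB}$. It is unitary because $B$ is self-adjoint. Define
\[
\tilde A := UAU^{-1}, \qquad D(\tilde A):=U D(A).
\]
The operator $\tilde A$ is self-adjoint, being unitarily equivalent to $A$. Its adjoint is $(UAU^*)^*=UA^*U^*=UAU^*$, with the domains matching because $U$ is a bijection of $H$. By Stone's theorem it therefore generates a unitary group $(e^{it\tilde A})_{t\in\R}$. The right-hand expressions $\exp(e^{iB}(itA)e^{-iB})=\exp(it\,e^{iB}Ae^{-iB})$ are just two notations for this group, since $e^{iB}(itA)e^{-iB}=it\,\tilde A$. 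The content of the claim is thus $Ue^{itA}U^{-1}=e^{it\tilde A}$.

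\textbf{Step 1 (the conjugated family is a unitary group).} Put $V(t):=Ue^{itA}U^{-1}$. Each $V(t)$ is unitary, and the group law $V(t+s)=V(t)V(s)$, $V(0)=I$, is immediate. Strong continuity follows from that of $e^{itA}$ together with the boundedness of $U$ and $U^{-1}$.

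\textbf{Step 2 (identify its generator).} Let $\psi\in U D(A)$, so $U^{-1}\psi\in D(A)$. Then
\[
\frac{V(t)\psi-\psi}{t}=U\,\frac{e^{itA}U^{-1}\psi-U^{-1}\psi}{t}\longrightarrow U(iA)U^{-1}\psi=i\tilde A\psi .
\]
Hence the generator $G$ of $V$ extends $i\tilde A$. Since $V$ is unitary, Stone's theorem gives $G=iC$ with $C$ self-adjoint, and $C\supset\tilde A$. A self-adjoint operator has no proper symmetric extension, so $C=\tilde A$. Therefore $V(t)=e^{it\tilde A}$, which is the identity to be proved.

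\textbf{Role of the hypothesis.} The assumption that $U$ maps the core $\mathcal D$ onto itself is what makes the formal conjugate $e^{iB}Ae^{-iB}$ meaningful on a concrete dense set, namely $\mathcal D$ itself. The only point needing care is that the operator defined on $\mathcal D$ by $\psi\mapsto UAU^{-1}\psi$ is essentially self-adjoint with closure $\tilde A$. This is immediate: $U^{-1}\mathcal D=\mathcal D$ is a core for $A$, so $U(A|_{\mathcal D})U^{-1}$ has closure $UAU^{-1}$, since unitary conjugation commutes with taking closures. Consequently the exponential defined from the operator on $\mathcal D$ coincides with $e^{it\tilde A}$, and the result follows.

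I expect this last domain bookkeeping to be the main (mild) obstacle. The group and generator computations are routine; the only thing to get right is that closures, adjoints and cores are all preserved under unitary conjugation.
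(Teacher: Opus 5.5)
The paper does not prove this statement. It quotes it from \cite{beauchard2025examples}, so there is no internal argument to compare yours against.

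Your proof is correct and self-contained. The operator $\tilde A=UAU^{-1}$ on $UD(A)$ is self-adjoint, and $V(t)=Ue^{itA}U^{-1}$ is a strongly continuous unitary group whose generator extends $i\tilde A$. Maximality of self-adjoint operators then forces equality, so $V(t)=e^{it\tilde A}$. Your domain remark is also the right reading of the right-hand side: the operator $\psi\mapsto e^{iB}Ae^{-iB}\psi$ on $\mathcal D$ is essentially self-adjoint with closure $\tilde A$, because $U^{-1}\mathcal D=\mathcal D$ is a core for $A$ and unitary conjugation commutes with closures.

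Step 2 can be made more direct. Since $U^{\pm1}$ are bounded, $\psi$ lies in the domain of the generator of $V$ if and only if $U^{-1}\psi\in D(A)$. This gives the generator equal to $i\tilde A$ at once, with no appeal to maximality. Equivalently, the statement is the functional-calculus identity $f(UAU^{-1})=Uf(A)U^{-1}$ for $f(\lambda)=e^{it\lambda}$.

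Your argument also shows that $Ue^{itA}U^{-1}=e^{itUAU^{-1}}$ holds for any unitary $U$. The hypothesis on $\mathcal D$ only serves to compute the conjugated generator as a concrete operator on a fixed dense set. That is exactly how it is used in Proposition~\ref{PropTransportSTAR}, with $\mathcal D=H^3(\T)$, on which multiplication by $e^{\pm i\varphi/\sqrt{\tau}}$ is an isomorphism.
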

\begin{proposition}[Theorem~VIII.31, \cite{reed2012methods}]
\label{prop: TK}
Let $A,B$ be self-adjoint operators on a Hilbert space $H$, with domains
$\mathcal D(A)$ and $\mathcal D(B)$, and assume that $A+B$ is essentially
self-adjoint on $\mathcal D(A)\cap \mathcal D(B)$. Then, for every $\psi_0\in H$
and every $t\in\mathbb R$,
\[
\bigl\|\bigl(e^{itA/n}e^{itB/n}\bigr)^n \psi_0-e^{it(A+B)}\psi_0\bigr\|_{H}\xrightarrow[n\to\infty]{}0.
\]
\end{proposition}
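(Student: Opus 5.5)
The plan is to prove the formula first under the stronger assumption that $A+B$ is self-adjoint on $\mathcal D(A)\cap\mathcal D(B)$, by a telescoping argument combined with uniform boundedness. Afterwards I upgrade to the essentially self-adjoint case through Chernoff's product theorem. Let $C:=\overline{A+B}$, which is self-adjoint by assumption. Set
\[
S(h):=e^{ihA}e^{ihB},\qquad U(h):=e^{ihC},
\]
and fix $t$. Both $S(h)$ and $U(h)$ are unitary. I start from the telescoping identity
\[
\bigl(S(t/n)^n-U(t/n)^n\bigr)\psi_0=\sum_{k=0}^{n-1}S(t/n)^{k}\bigl(S(t/n)-U(t/n)\bigr)U(t/n)^{n-1-k}\psi_0 .
\]
Since $U(t/n)^n=e^{itC}$, taking norms gives
\[
\bigl\|\bigl(S(t/n)^n-e^{itC}\bigr)\psi_0\bigr\|\le t\sup_{|s|\le |t|}\Bigl\|\tfrac{n}{t}\bigl(S(t/n)-U(t/n)\bigr)U(s)\psi_0\Bigr\|.
\]
So it suffices to show that $h^{-1}(S(h)-U(h))\varphi\to0$ as $h\to0$, uniformly for $\varphi$ in the set $K=\{U(s)\psi_0:\ |s|\le |t|\}$. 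It is enough to do this for $\psi_0\in\mathcal D(C)$. The general case then follows by density, because all operators involved are unitary and the error is therefore $2\varepsilon$-stable.

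In the self-adjoint case $\mathcal D(C)=\mathcal D(A)\cap\mathcal D(B)$. With the graph norm of $C$, this space is a Hilbert space, and $K$ is a compact subset of it, since $s\mapsto U(s)\psi_0$ is continuous in the graph norm. For each $\varphi\in\mathcal D(C)$ I have
\[
h^{-1}\bigl(S(h)-I\bigr)\varphi=h^{-1}\bigl(e^{ihA}-I\bigr)\varphi+e^{ihA}\,h^{-1}\bigl(e^{ihB}-I\bigr)\varphi\longrightarrow iA\varphi+iB\varphi=iC\varphi,
\]
and also $h^{-1}(U(h)-I)\varphi\to iC\varphi$. Hence $h^{-1}(S(h)-U(h))\varphi\to0$ pointwise. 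Each operator $h^{-1}(S(h)-U(h))$ is bounded from $(\mathcal D(C),\|\cdot\|_C)$ to $H$, so Banach--Steinhaus gives a uniform bound on this family. A uniformly bounded family that converges pointwise converges uniformly on compact sets, which yields the convergence uniformly on $K$ and concludes this case.

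The main obstacle is the essentially self-adjoint case. There $\mathcal D(A)\cap\mathcal D(B)$ is only a core for $C$, and it is not a complete space on which Banach--Steinhaus applies. For this case I would follow Chernoff's approach with $F(h)=S(h)$, a strongly continuous family of contractions with $F(0)=I$. The first step is Chernoff's lemma, $\|(F^n-e^{n(F-I)})\varphi\|\le\sqrt n\,\|(F-I)\varphi\|$ for a contraction $F$. Applied with $F=F(t/n)$, it reduces the claim to the convergence $e^{n(F(t/n)-I)}\to e^{itC}$. Next, the generators $\tfrac{n}{t}(F(t/n)-I)$ are bounded dissipative operators converging to $iC$ on the core $\mathcal D(A)\cap\mathcal D(B)$. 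The Trotter--Kato approximation theorem, whose hypothesis is strong resolvent convergence and which follows here because convergence on a core suffices, then gives strong convergence of the semigroups. The delicate point is that the $\sqrt n\,\|(F-I)\varphi\|$ term must be shown to be $o(1)$. This holds because $\|(F(t/n)-I)\varphi\|=O(1/n)$ for $\varphi$ in the core, and the general case again follows by density together with the uniform bound $1$ on all operators involved.
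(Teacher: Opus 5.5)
Your proof is correct. The paper gives no argument for this proposition and simply cites Reed--Simon, so the comparison is really with the literature.

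\textbf{How the two halves relate to the literature.} Your first half is the classical Reed--Simon telescoping argument. It applies only when $A+B$ is self-adjoint on $\mathcal D(A)\cap\mathcal D(B)$, because completeness of $\mathcal D(C)$ in the graph norm is what makes Banach--Steinhaus and the compactness of $\{U(s)\psi_0:\ |s|\le |t|\}$ available. Your second half is Chernoff's product-formula argument. In it, the $\sqrt n$-lemma combined with the core bound
\[
\bigl\|(F(h)-I)\varphi\bigr\|\le |h|\bigl(\|A\varphi\|+\|B\varphi\|\bigr),\qquad \varphi\in\mathcal D(A)\cap\mathcal D(B),
\]
replaces the uniform estimate that Banach--Steinhaus supplied. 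The Trotter--Kato approximation theorem then needs only convergence of the bounded generators on a core of $iC$, which is exactly what essential self-adjointness provides.

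\textbf{What each route buys.} The second half alone proves the full statement, since self-adjointness on $\mathcal D(A)\cap\mathcal D(B)$ is a special case of essential self-adjointness there. So the first half is optional. Its advantage is that it is elementary and self-contained, avoiding both Chernoff's lemma and Trotter--Kato.

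\textbf{Two small corrections.}
\begin{enumerate}
\item In the telescoping estimate the prefactor should be $|t|$, not $t$.
\item For $t<0$ the operators $\frac{n}{t}\bigl(F(t/n)-I\bigr)$ are not dissipative; their negatives are. Either work with $\frac{n}{|t|}\bigl(F(t/n)-I\bigr)$, which is dissipative, converges to $-iC$ on the core, and gives $e^{|t|(-iC)}=e^{itC}$ at time $|t|$. Or reduce to $t>0$ by replacing $(A,B)$ with $(-A,-B)$. The case $t=0$ is trivial.
\end{enumerate}
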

\begin{proposition}\label{PropTransportSTAR}
Let $\varphi \in C^\infty(\T;\R)$ and, for $\tau \ge 0$, set $g_\tau=(\varphi')^3+\alpha\sqrt{\tau}\,(\varphi')^2$. For every $\psi \in L^{2}(\T)$, we have
\[\lim_{\tau\to0}\Bigl(\lim_{n\to\infty}W_{\tau,n}\psi\Bigr)=e^{T_f}\psi,
\quad f=3(\varphi')^2.\]
\end{proposition}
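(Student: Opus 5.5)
The plan is to compute the inner limit $n\to\infty$ exactly via \Cref{prop: conj} and \Cref{prop: TK}, obtaining a unitary group with an explicit generator $G_\tau$. We then show $e^{G_\tau}\to e^{T_f}$ strongly by an energy estimate in the spirit of \Cref{prop: sat}.

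\emph{Step 1 (conjugation).} Set $s=\tau^{-1/2}$. Since $e^{\pm i s\varphi}$ is an isomorphism of $H^3(\T)$, which is a core for $L$, \Cref{prop: conj} gives
\[
e^{is\varphi}e^{\frac{\tau}{n}L}e^{-is\varphi}=\exp\bigl(\tfrac{\tau}{n}L_\varphi\bigr),\qquad L_\varphi:=e^{is\varphi}Le^{-is\varphi}.
\]
Here $L_\varphi$ is obtained from $L$ by replacing $\partial_x$ with $\partial_x-is\varphi'$. Expanding $-(\partial_x-is\varphi')^3+i\alpha(\partial_x-is\varphi')^2$ and multiplying by $\tau$, the powers of $\tau$ sort as follows:
\[
\tau L_\varphi=-i\tau^{-1/2}(\varphi')^3-i\alpha(\varphi')^2+3\bigl((\varphi')^2\partial_x+\varphi'\varphi''\bigr)+\sqrt\tau\,A_\tau ,
\]
where $A_\tau$ is a second-order differential operator with smooth coefficients, uniformly bounded in $\tau\in(0,1]$ as a map $H^{k+2}\to H^k$ (it also contains $\sqrt\tau L$). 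The choice $g_\tau=(\varphi')^3+\alpha\sqrt\tau(\varphi')^2$ is made precisely so that $i g_\tau/\sqrt\tau$ cancels the first two terms. Moreover, $3(\varphi')^2\partial_x+3\varphi'\varphi''=f\partial_x+\tfrac12 f'=T_f$ with $f=3(\varphi')^2$. Hence
\[
G_\tau:=\tau L_\varphi+\tfrac{i g_\tau}{\sqrt\tau}=T_f+\sqrt\tau\,A_\tau .
\]

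\emph{Step 2 (Trotter limit in $n$).} Apply \Cref{prop: TK} with $A=g_\tau/\sqrt\tau$ (a bounded self-adjoint multiplication operator) and $B=-i\tau L_\varphi$ (self-adjoint on $H^3$, being a unitary conjugate of $-i\tau L$). Since $A$ is bounded, $A+B$ is self-adjoint on $\mathcal D(B)=H^3(\T)$, by Kato--Rellich. Therefore $W_{\tau,n}\psi\to e^{G_\tau}\psi$ for every $\psi\in L^2$, and $G_\tau$ is skew-adjoint on $H^3$.

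\emph{Step 3 (limit $\tau\to0$).} By unitarity of all the operators involved and density, it suffices to take $\psi\in H^5(\T)$. Put $w(t)=e^{tT_f}\psi$ and $v(t)=e^{tG_\tau}\psi-w(t)$. Since $T_f$ is a smooth transport (explicit formula $U_{\phi_f^t}$ with $\phi_f^t$ smooth), we have $\sup_{t\in[0,1]}\|w(t)\|_{H^5}\lesssim\|\psi\|_{H^5}$. Moreover $v$ solves
\[
\partial_t v=G_\tau v+\sqrt\tau A_\tau w,\qquad v(0)=0.
\]
Skew-adjointness of $G_\tau$ gives
\[
\tfrac{d}{dt}\|v\|_{L^2}^2=2\Re\langle\sqrt\tau A_\tau w,v\rangle\lesssim\sqrt\tau\,\|w\|_{H^3}\|v\|_{L^2}.
\]
Hence $\|v(1)\|_{L^2}\lesssim\sqrt\tau\,\|\psi\|_{H^5}\to0$; the exponent in $H^5$ is just comfortable room, since $H^3$ already suffices.

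\emph{Main obstacle.} I expect the main care to be needed in the functional-analytic justifications rather than the algebra. These are: the hypotheses of \Cref{prop: conj} and \Cref{prop: TK} (cores and self-adjointness of the sum), and the fact that $w(t)\in\mathcal D(G_\tau)=H^3$ with uniform bounds, which legitimizes the energy identity. Keeping track of the exact $\tau$-dependence of the lower-order terms in the expansion of Step 1 is the other point to check carefully.
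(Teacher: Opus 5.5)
Your proposal is correct and follows essentially the paper's route. You use conjugation plus Trotter--Kato to identify the inner limit as $\exp(\tau L+T_f+\sqrt{\tau}K)$, with the same cancellation coming from $g_\tau$, then a perturbation estimate: your energy identity is the differential form of the paper's Duhamel formula, combined with uniform $H^3$ bounds on $e^{sT_f}$ and density. One cosmetic slip: your $A_\tau=\sqrt{\tau}L+K$ contains a third-order term, so it is uniformly bounded $H^3\to L^2$ (which is all Step 3 uses), not $H^{k+2}\to H^k$.
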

\begin{proof}
Using \Cref{prop: conj} in \eqref{eq: Wtaun}, we easily get
\begin{equation*}
\label{eq: Wtaun2}
W_{\tau,n}=\big(e^{\frac{ig_\tau}{n\sqrt{\tau}}}\exp{\big(\tfrac{\tau}{n}e^{\frac{i\varphi}{\sqrt{\tau}}} Le^{-\frac{i\varphi}{\sqrt{\tau}}}\big)}\big)^{n}    ,
\end{equation*}
and applying \Cref{prop: TK}, we deduce that $W_{\tau,n} \longrightarrow W_{\tau}$ strongly as $n \to \infty$, where
\begin{equation*}
\label{eq: Wtau}
W_{\tau}=\exp{\big(\tfrac{ig_\tau}{\sqrt{\tau}}+\tau e^{\frac{i\varphi}{\sqrt{\tau}}} Le^{-\frac{i\varphi}{\sqrt{\tau}}}\big)}.    
\end{equation*}
Explicit computations yields
\[\begin{aligned}
  \frac{ig_\tau}{\sqrt{\tau}}+\tau e^{\frac{i\varphi}{\sqrt{\tau}}} Le^{-\frac{i\varphi}{\sqrt{\tau}}}=&\tau L +T_{f}+\sqrt{\tau}\big(3i\varphi^{\prime}\partial_x^2+(3i\varphi^{\prime\prime}+2\alpha\varphi^{\prime})\partial_x+i\varphi^{\prime\prime\prime}+\alpha\varphi^{\prime\prime}\big)\\
  &+\frac{ig_\tau}{\sqrt{\tau}}-i\alpha(\varphi^{\prime})^{2}-\frac{i(\varphi^{\prime})^{3}}{\sqrt{\tau}},  
\end{aligned}\]
where $f=3(\varphi^{\prime})^2\in C^\infty(\T;\R)$. As $g_\tau=(\varphi^{\prime})^{3}+\alpha\sqrt{\tau}(\varphi^{\prime})^{2}$, we observe that
\[W_\tau:=\exp{\left(\tau L + T_f + \sqrt{\tau}K\right)}, \quad f=3(\varphi^{\prime})^2,\]
where $K$ is a differential operator of order at most $2$ with smooth coefficients depending on $\varphi$ and $\alpha$. Define $A_\tau=\tau L + T_f + \sqrt{\tau}K$  and $A_0=T_f$, so that $W_\tau=e^{A_\tau}$ and $e^{T_f}=e^{A_0}$. Let $\psi \in H^{3}(\T)$. By the Duhamel formula
\[e^{A_\tau}\psi-e^{A_0}\psi=\int_{0}^{1}e^{(1-s)A_\tau}(A_\tau - A_0)e^{sA_0}\psi ds.\]
Yet, $A_\tau - A_0=\tau L+\sqrt{\tau}K: H^3(\T) \rightarrow L^{2}(\T)$ and $(e^{sA_0})_{s\in[0,1]}=(e^{sT_f})_{s\in[0,1]}$ is uniformly bounded on
$H^3(\T)$. Indeed, if $z(s)=e^{sT_f}\psi$ then 
\[\partial_s\partial_x^k z=T_f(\partial_x^k z)+[\partial_x^k,T_f]u,\quad k\le 3,\]
as $T_f$ is skew-adjoint, we have
\[\|e^{sT_f}\psi\|_{H^3(\T)}\le C_f\|\psi\|_{H^3(\T)},
\quad s\in[0,1].\]
Therefore,
\[\|e^{A_\tau}\psi-e^{A_0}\psi\|_{L^{2}(\T)}\leq C\int_{0}^{1}\|(A_\tau - A_0)e^{sA_0}\psi\|_{L^{2}(\T)}ds\]
and
\[\|W_\tau\psi - e^{T_f}\psi\|_{L^{2}(\T)}=\|e^{A_\tau}\psi-e^{A_0}\psi\|_{L^{2}(\T)}\leq C(\tau+\sqrt{\tau})\|\psi\|_{H^{3}(\T)}.\]
By density $W_\tau \longrightarrow e^{T_f}$ strongly and $\tau \to 0$.
\end{proof}

We introduce the following class of profiles 
\[G=\{(\varphi^{\prime})^2,\; \varphi\in C^{\infty}(\T;\R)\}.\]
Note that Proposition \ref{PropTransportSTAR} states that for every $f\in G$, the operator $e^{T_f}$ is $L^2$--STAR. The prefactor $3$ plays no structural role since $G$ is stable under positive scaling. Indeed, for any $\lambda\in \R$,  $|\lambda|(\varphi')^2 = (\sqrt{|\lambda|}\,\varphi')^2$) and $T_{\lambda f}=\lambda T_f$. The main objective of the next section is to prove the following result.
\begin{proposition}
\label{prop: spanG}
Let $f \in \mathrm{span}(G)$. Then $e^{T_f}$ is $L^2$--STAR.
\end{proposition}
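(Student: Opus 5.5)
Our plan is to reduce the statement to two closure properties of the set
\[
\mathcal S:=\{f\in C^\infty(\T;\R):\ e^{tT_f}\ \text{is } L^2\text{--STAR for every } t\in\R\},
\]
and then to conclude by \Cref{lem:closure-L2STAR}.

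\textbf{Step 1: $G$ and $-G$ lie in $\mathcal S$.} By \Cref{PropTransportSTAR}, combined with \Cref{lem:closure-L2STAR} (compositions of phases and free flows, then strong limits in $n$ and in $\tau$), $e^{T_f}$ is $L^2$--STAR for every $f\in G$. Since $G$ is a cone ($tG\subset G$ for $t\ge0$), this gives $e^{tT_f}$ for all $t\geq 0$.

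For negative times, we redo the computation of \Cref{PropTransportSTAR} with $\varphi$ replaced by $-\varphi$ and $\sqrt\tau$ replaced by $-\sqrt\tau$ in the conjugation. Concretely, we conjugate by $e^{-i\varphi/\sqrt\tau}$ and choose the phase $-(\varphi')^3/\sqrt\tau$ plus the $\alpha$ correction. The conjugated generator is then $\tau L - T_f + O(\sqrt\tau)$, because the first-order term $3(\varphi')^2\partial_x$ comes with the factor $\tau\cdot(i/\sqrt\tau)^2\cdot(\pm1)$. One has to check the sign carefully. If it does not flip, we instead use the free flow for time $-\tau/n$. This is not directly available, but $e^{-sL}$ can be approximated by running forward: the spectrum of $L$ is $i(k^3-\alpha k^2)$, so some almost-periodicity argument would be needed, which we would avoid if possible. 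The cleaner route is the sign flip in the phase. Indeed, $e^{i\varphi/\sqrt\tau}\partial_x e^{-i\varphi/\sqrt\tau}=\partial_x - i\varphi'/\sqrt\tau$, and the cubic expansion gives a first-order term $-3(\varphi')^2/\tau\cdot\partial_x$ times $-\tau$, i.e.\ $+3(\varphi')^2\partial_x$, which is even in $\varphi$. Hence we may need a genuinely different trick for $-G$.

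The trick we would try first is a group commutator. Let $h=3(\varphi')^2$ and write $e^{T_f}e^{T_g}$ via Trotter--Kato (\Cref{prop: TK}) with small time steps: the operators $e^{T_f/n}$ composed $n$ times with $e^{T_g/n}$ converge to $e^{T_{f+g}}$. This shows that $\mathcal S$ is closed under addition. For negatives, we use that $-(\varphi')^2$ can be written as $\sum_j(\chi_j')^2 - c$, adding a large constant $c>0$; the constant field $c$ generates a rotation. We would then show that rotations $x\mapsto x+a$ are $L^2$--STAR for all $a$ by a separate argument. For instance, take $\varphi=\sin x$ and $\tilde\varphi=\cos x$, so that $(\varphi')^2+(\tilde\varphi')^2=1$, which gives $e^{tT_1}$ for $t>0$. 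Since rotation by $t=2\pi-a$ equals rotation by $-a$, all rotations are obtained. Thus $-f = \sum(\chi_j')^2 - c$ with $c$ a constant, and $e^{-T_c}$ equals a forward rotation.

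\textbf{Step 2: closure under sums and conclusion.} For $f_1,f_2\in\mathcal S$, the operator $T_{f_1}+T_{f_2}=T_{f_1+f_2}$ is essentially skew-adjoint on $H^1$ (smooth vector field). Therefore \Cref{prop: TK} applies with $A=-iT_{f_1}$ and $B=-iT_{f_2}$, and \Cref{lem:closure-L2STAR} gives $f_1+f_2\in\mathcal S$. Writing $f=\sum a_j(\varphi_j')^2$ with the positive coefficients absorbed into $G$, we split $f=f_+-f_-$ with $f_\pm\in\sum G$. Handling $-f_-$ as above then concludes.

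\textbf{Main obstacle.} The main obstacle is representing $-(\varphi')^2$ as an element of $\sum G$ plus a constant. This requires $c-(\varphi')^2$ to be a finite sum of squares of derivatives of periodic functions. We would take $c>\max(\varphi')^2$, set $c-(\varphi')^2=\rho^2$ with $\rho>0$ smooth, and try to write $\rho^2=\sum_j(\chi_j')^2$. For instance, take $\chi_1'=\rho\cos\theta$ and $\chi_2'=\rho\sin\theta$, choosing $\theta$ so that both have zero mean. This is a two-parameter condition which we would expect to solve with a winding phase $\theta=x+\eta(x)$, but verifying it is the delicate point.
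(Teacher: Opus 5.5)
Your strategy is correct and differs from the paper's. As written, though, it stops exactly at the step you flag, and that step is in fact easy. Take $c>\max(\varphi')^2$ and $\rho:=\sqrt{c-(\varphi')^2}$, which is smooth and positive. Put $m:=\frac{1}{2\pi}\int_0^{2\pi}\rho\,dx$ and use the degree-one winding phase $\theta(x):=\frac1m\int_0^x\rho(y)\,dy$, so that $m\theta'=\rho$ and $\theta(x+2\pi)=\theta(x)+2\pi$. Then $\rho\cos\theta=(m\sin\theta)'$ and $\rho\sin\theta=(-m\cos\theta)'$ are derivatives of smooth periodic functions, so both zero-mean conditions hold automatically, and
\[
-(\varphi')^2=\bigl((m\sin\theta)'\bigr)^2+\bigl((m\cos\theta)'\bigr)^2-c .
\]
The rest of your argument goes through with one organizational fix. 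Your $\mathcal S$ requires all $t\in\R$, so using its closure under sums to treat $h-c$ with $h\in\mathrm{cone}(G)$ presupposes $G\subset\mathcal S$, which is what you are trying to prove. Only forward times are needed, so work instead with $\mathcal S_+:=\{f:\ e^{tT_f}\text{ is }L^2\text{--STAR for all }t\ge0\}$. By \Cref{prop: TK} and \Cref{lem:closure-L2STAR} this is a convex cone. It contains $G$, and it contains $\pm1$ because rotations are $2\pi$-periodic. By the identity above it then contains $-G$, and hence $\mathrm{span}(G)$.

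The paper's \Cref{prop: neg_cone} shifts in the opposite direction. It writes $-f=-g+\delta$ with $g=f+\delta\ge\delta>0$, and uses that the flow of a nonvanishing field on $\T$ is periodic with period $\int_0^{2\pi}dx/g$. Thus $e^{-\kappa T_g}=e^{sT_g}$ for some $s>0$, and the backward flow is realized dynamically. This needs nothing about the algebraic structure of $G$ beyond $1\in\mathrm{cone}(G)$. Your route uses only the periodicity of rigid rotations, but pays for it with the representation lemma. In return, the lemma gives more: every smooth positive function lies in $G+G$, so $\mathrm{span}(G)=C^\infty(\T;\R)$. The statement then yields \Cref{th: flows} immediately, which makes \Cref{prop: LieG}, the Lie-algebra property of $\mathcal L$, and the Fej\'er approximation step unnecessary.
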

\subsection{A cone property}
We denote by $\mathrm{cone}(G)$ the convex cone generated by $G$, i.e,
\[\mathrm{cone}(G)=\big\{\sum_{j=1}^m \alpha_j g_j,\;m\in \N^*, \; g_j \in G, \; \alpha_j\ge0\big\}.\]
\begin{proposition}
\label{prop: coneG}
For every $f\in \mathrm{cone}(G)$, $e^{T_f}$ is $L^2$--STAR. 
\end{proposition}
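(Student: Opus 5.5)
The plan is to write $f=\sum_{j=1}^m \alpha_j g_j$ with $\alpha_j\ge 0$ and $g_j\in G$, and to combine Proposition \ref{PropTransportSTAR} with the Trotter--Kato formula and the closure properties of Lemma \ref{lem:closure-L2STAR}.

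First, each $e^{tT_{\alpha_j g_j}}$ is $L^2$--STAR for every $t\ge 0$. Indeed, as observed after Proposition \ref{PropTransportSTAR}, $G$ is stable under nonnegative scaling. Hence $t\alpha_j g_j = 3(\tilde\varphi_j')^2$ for a suitable $\tilde\varphi_j\in C^\infty(\T;\R)$; if $\alpha_j=0$ the operator is the identity, which is trivially $L^2$--STAR with $T=0$. Moreover $T_{t\alpha_j g_j}=tT_{\alpha_jg_j}$. Each $W_{\tau,n}$ in \eqref{eq: Wtaun} is a finite composition of phase multiplications, which are $L^2$--STAR by Theorem \ref{th: phases}, and of free flows $e^{\tau L/n}$, which are reached with zero control in time $\tau/n$. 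So $W_{\tau,n}$ is $L^2$--STAR by Lemma \ref{lem:closure-L2STAR}(1). Here one must make sure that the total time $\tau$ plus the phase times can be made smaller than $\eps$; this is fine because $\tau\to0$. Two strong limits then follow: first $n\to\infty$, then $\tau\to 0$, using Lemma \ref{lem:closure-L2STAR}(2). Together they show that $e^{T_{t\alpha_j g_j}}$ is $L^2$--STAR.

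Next, by Lemma \ref{lem:closure-L2STAR}(1), the product
\[
S_N:=\Big(e^{\frac1N T_{\alpha_1g_1}}\cdots e^{\frac1N T_{\alpha_m g_m}}\Big)^N
\]
is $L^2$--STAR for every $N\in\N^*$. I then apply Proposition \ref{prop: TK} to $A_j=-iT_{\alpha_j g_j}$. These operators are self-adjoint, since each $T_{h}$ with $h$ smooth and real is skew-adjoint and generates the unitary group $U_{\phi_h^t}$. The formula extends to $m$ summands by induction on $m$, grouping the first $m-1$ factors. This gives $S_N\to e^{T_f}$ strongly, because $\sum_j T_{\alpha_j g_j}=T_f$ by linearity of $h\mapsto T_h$. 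Lemma \ref{lem:closure-L2STAR}(2) then concludes.

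The main obstacle is verifying the hypothesis of Trotter--Kato, namely that $T_{h_1}+T_{h_2}=T_{h_1+h_2}$ is essentially skew-adjoint on $D(T_{h_1})\cap D(T_{h_2})$, and likewise for the partial sums in the induction. I would show that $C^\infty(\T)$ is a core for $T_h$ for every smooth real $h$. The group $e^{tT_h}=U_{\phi_h^t}$ preserves $C^\infty(\T)$, since $\phi_h^t$ is a smooth diffeomorphism, and $C^\infty$ is dense. Nelson's invariant-core criterion then gives that $C^\infty$ is a core. Since $C^\infty\subset D(T_{h_1})\cap D(T_{h_2})$, and on it the sum coincides with $T_{h_1+h_2}$, which is essentially skew-adjoint there, the hypothesis holds. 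The iterated limits and the induction also require care: the partial sums $\sum_{j<m}\alpha_jg_j$ are smooth real fields, so the same core argument applies at each stage.
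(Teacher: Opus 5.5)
Your proposal follows the paper's argument. You rescale so that $e^{\frac1N T_{g}}$ is $L^2$--STAR for $g\in G$, compose, pass to the limit with Trotter--Kato, and conclude with the strong-limit closure of \Cref{lem:closure-L2STAR}. Two of your checks are details the paper leaves implicit:
\begin{itemize}
\item the Nelson invariant-core verification of the essential skew-adjointness hypothesis;
\item the time-budget remark for $W_{\tau,n}$, which is not a priori STAR for fixed $\tau$, so a direct $\varepsilon/3$ argument must replace \Cref{lem:closure-L2STAR}(2) at that point.
\end{itemize}

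One step is mis-justified. The $m$-factor product formula does not follow from the two-factor \Cref{prop: TK} by grouping the first $m-1$ factors. The grouped factor $e^{\frac1N T_{h_1}}\cdots e^{\frac1N T_{h_{m-1}}}$ is not of the form $e^{\frac1N B}$ with $B$ independent of $N$, so \Cref{prop: TK} does not apply to it. There are two easy repairs:
\begin{itemize}
\item Cite the $m$-factor formula directly. It follows from Chernoff's product formula applied to $F(s)=e^{sT_{h_1}}\cdots e^{sT_{h_m}}$ on your core $C^\infty(\T)$.
\item Induct at the level of STAR operators, which is what the paper's reduction to two summands implicitly does. Set $h'=\sum_{j<m}\alpha_j g_j$. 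By the induction hypothesis and the invariance of $\mathrm{cone}(G)$ under nonnegative scaling, $e^{\frac1N T_{h'}}=e^{T_{h'/N}}$ is already STAR. The two-factor formula applied to $\bigl(e^{\frac1N T_{h'}}e^{\frac1N T_{\alpha_m g_m}}\bigr)^N$ then finishes the proof.
\end{itemize}
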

\begin{proof}
As $G$ is stable under multiplication by positive constants, it is enough to prove
that for every $f_1,f_2\in G$, the operator $e^{T_{f_1+f_2}}$ is $L^2$--STAR. Let $f_1,f_2\in G$ and $n\in\mathbb N$. By scaling with $\lambda=1/n$,
\[
\tfrac{1}{n}f_j \in G \quad\Rightarrow\quad S_{j,n}=e^{T_{\frac{1}{n} f_j}}=e^{\frac{1}{n}T_{f_j}}
\ \text{is $L^2$--STAR}\quad (j=1,2).\]
By stability under composition, we have that $S_{1,n}S_{2,n}$ is $L^2$--STAR and 
\[U_n:=\bigl(S_{1,n}S_{2,n}\bigr)^n=\bigl(e^{\frac{1}{n}T_{f_1}}e^{\frac{1}{n}T_{f_2}}\bigr)^n.\]
Finally, \Cref{prop: TK} infers that $e^{T_{f_1}+T_{f_2}}=e^{T_{f_1+f_2}}$ is $L^2$--STAR.    
\end{proof}
\begin{corollary}\label{cor: translations}
For every $\delta>0$, the translation group $e^{\delta\partial_x}$ is $L^2$--STAR.
\end{corollary}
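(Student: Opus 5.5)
We plan to derive the corollary directly from \Cref{prop: coneG} by showing that a positive constant function belongs to $\mathrm{cone}(G)$. For a constant $f\equiv\delta$, the transport operator reduces to $T_\delta=\delta\partial_x+\frac12\cdot 0=\delta\partial_x$. Hence $e^{T_\delta}=e^{\delta\partial_x}$, which is the translation $\psi\mapsto\psi(\cdot+\delta)$; this agrees with the flow $\phi^t_\delta(x)=x+t\delta$, whose Jacobian factor is $1$. It therefore suffices to write $\delta$ as a nonnegative combination of elements of $G=\{(\varphi')^2\}$ with $\varphi\in C^\infty(\T;\R)$.

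The constraint is that $\varphi$ must be periodic, so $\varphi'$ has zero mean and cannot be a nonzero constant. We get around this by using two periodic functions whose derivative squares sum to a constant. Take $\varphi_1(x)=\sin(x)$ and $\varphi_2(x)=-\cos(x)$, so that $(\varphi_1')^2+(\varphi_2')^2=\cos^2(x)+\sin^2(x)=1$. Thus $1\in\mathrm{cone}(G)$, and by positive scaling $\delta\in\mathrm{cone}(G)$ for every $\delta>0$; concretely, $\delta=(\sqrt\delta\cos x)^2+(\sqrt\delta\sin x)^2$. Applying \Cref{prop: coneG} then shows that $e^{T_\delta}=e^{\delta\partial_x}$ is $L^2$--STAR.

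We do not expect a real obstacle. The only point to check is that \Cref{prop: coneG} applies to this sum of two elements of $G$, which its proof covers directly through the Trotter--Kato argument. Alternatively, one can observe that $T_{f_1}$ and $T_{f_2}$ are bounded perturbations of one another up to first order, and that $T_{f_1+f_2}=\delta\partial_x$ is skew-adjoint on $H^1(\T)$, so the essential self-adjointness hypothesis of \Cref{prop: TK} holds. Negative $\delta$, although not needed here, would follow from $\varphi\mapsto$ sign changes only if $-1\in\mathrm{span}(G)$ were handled, which is the content of \Cref{prop: spanG}.
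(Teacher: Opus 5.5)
Your proposal is correct and follows the paper's proof: $\cos^2 x,\sin^2 x\in G$ (the paper takes $\varphi_1=\cos x$, $\varphi_2=\sin x$, and your choices produce the same two profiles), so $\delta=\delta(\cos^2x+\sin^2x)\in\mathrm{cone}(G)$, $T_\delta=\delta\partial_x$, and \Cref{prop: coneG} concludes. Your closing aside is unnecessary and slightly inaccurate, since $T_{f_1}-T_{f_2}=(f_1-f_2)\partial_x+\tfrac12(f_1'-f_2')$ is an unbounded first-order operator and not a bounded perturbation, but the main argument does not depend on it.
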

\begin{proof}Consider $\varphi_1=\cos(x)$ and $\varphi_2=\sin(x)$,  and set $f_1=(\varphi_1')^2=\sin^2 (x)$,
$f_2=(\varphi_2')^2=\cos^2 (x).$ Then for any $\delta>0$ and any $f=f_1+f_2$, $e^{\delta T_{f}}$ is $L^2$--STAR. 
\end{proof}
In \cite{BeauchardPozzoli2025AIHPC}, the admissible set of drift fields is $\{\varphi^{\prime}: \varphi \in C^{\infty}(\T;\R)\}$. In our framework, however, we can only generate transport operators of the form $e^{T_f}$, with $f=(\varphi^{\prime})^{2}\geq 0$, therefore at first sight the sign constraint might look like an obstruction to producing negative transport. Nevertheless, in the periodic case, we can still obtain translation in both directions. 

\begin{proposition}\label{prop: neg_cone}
Let $f \in \mathrm{cone}(G)$. Then $e^{-T_f}=e^{T_{-f}}$ is $L^2$--\emph{STAR}.
\end{proposition}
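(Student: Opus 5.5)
The plan is to use periodicity on $\T$ to turn forward transport along a nonvanishing field into backward transport, and then remove the added constant with a Trotter product.

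\textbf{Step 1: a positive field in $\mathrm{cone}(G)$.} Let $f\in\mathrm{cone}(G)$ and fix a constant $c>0$. As in \Cref{cor: translations}, $1=\sin^2(x)+\cos^2(x)\in\mathrm{cone}(G)$, so $c\in\mathrm{cone}(G)$. Hence $g:=f+c\in\mathrm{cone}(G)$ with $g\ge c>0$ on $\T$.

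\textbf{Step 2: the flow of $g$ is periodic.} Since $g$ does not vanish, every orbit of $\partial_t\phi=g(\phi)$ winds once around the circle in the same time
\[
T_g:=\int_0^{2\pi}\frac{dx}{g(x)}.
\]
So $\phi_g^{T_g}=\mathrm{id}$ on $\T$ (as a map of $\R/2\pi\Z$) and $\partial_x\phi_g^{T_g}\equiv1$. By the explicit formula $e^{tT_g}=U_{\phi_g^t}$, this gives $e^{T_gT_g}=\mathrm{Id}$ on $L^2(\T)$. To be safe I would verify $\phi_g^{T_g}(x)=x+2\pi$ on the lift by separating variables: the time needed to go from $x$ to $x+2\pi$ is $\int_x^{x+2\pi}dy/g(y)=T_g$, independent of $x$.

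For any $s>0$, pick $k\in\N$ with $kT_g\ge s$. Then
\[
e^{-sT_g}=e^{(kT_g-s)T_g}=e^{T_{(kT_g-s)g}}.
\]
Since $(kT_g-s)g\in\mathrm{cone}(G)$ (a nonnegative multiple), \Cref{prop: coneG} shows that $e^{-sT_g}$ is $L^2$--STAR for every $s>0$.

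\textbf{Step 3: removing the constant.} Note that $T_c=c\,\partial_x$, because $c'=0$, so $T_{-f}=T_{-g}+T_c$. For each $n$, the operator $e^{-\frac1nT_g}$ is $L^2$--STAR by Step 2, and $e^{\frac1nT_c}=e^{\frac cn\partial_x}$ is $L^2$--STAR by \Cref{cor: translations}. By \Cref{lem:closure-L2STAR}(1), $\bigl(e^{-\frac1nT_g}e^{\frac1nT_c}\bigr)^n$ is $L^2$--STAR. The Trotter--Kato formula (\Cref{prop: TK}), applied to the self-adjoint operators $iT_{-g}$ and $iT_c$, gives strong convergence of this product to $e^{T_{-g}+T_c}=e^{-T_f}$. \Cref{lem:closure-L2STAR}(2) then concludes that $e^{-T_f}$ is $L^2$--STAR.

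\textbf{Main obstacle.} The delicate point is the hypothesis of \Cref{prop: TK}: we need $T_{-g}+T_c=-f\partial_x-\tfrac12 f'$ to be essentially skew-adjoint on $\mathcal D(T_g)\cap\mathcal D(T_c)$. Since $c>0$, this intersection is $H^1(\T)$. For smooth $f$, $T_{-f}$ generates the unitary group $U_{\phi_{-f}^t}$, which preserves $C^\infty(\T)$. Hence $C^\infty(\T)\subset H^1(\T)$ is an invariant core, and Nelson's invariant-domain criterion gives essential skew-adjointness. The same argument already underlies the proof of \Cref{prop: coneG}, so I expect this check to be routine once it is written out. The periodicity identity in Step 2 is elementary but is the genuinely new ingredient.
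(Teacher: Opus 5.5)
Your proof is correct and follows the paper's argument essentially step by step. You shift $f$ by a positive constant to get a nonvanishing $g\in\mathrm{cone}(G)$, use the periodicity of $\phi_g^t$ (period $\int_0^{2\pi}dx/g$) to rewrite $e^{-sT_g}$ as a forward transport $e^{T_{s'g}}$ with $s'\ge 0$, and then remove the constant with a Trotter--Kato product against the translations of \Cref{cor: translations}; your extra check of the essential skew-adjointness hypothesis on $H^1(\T)$ via an invariant smooth core is a welcome detail the paper leaves implicit (just avoid reusing the symbol $T_g$ for the period, since it clashes with the transport operator).
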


\begin{proof}
\textbf{Step 1.} Take $f \in \mathrm{cone}(G)$ and $\delta>0$, and set $g:=f+\delta$.
Since $1\in\mathrm{cone}(G)$, we have $g\in\mathrm{cone}(G)$ and
$g\ge \delta>0$. Let
\[\Pi(g):=\int_{0}^{2\pi}\frac{1}{g(x)}\,dx\]
be the period of the flow associated with $g$. Thus for all $n\in \N^{*}$,
\[\phi_g^{-\lambda}=\phi_g^{n\Pi(g)-\lambda}, \quad \forall \lambda \in [0,\Pi(g)].\]
For any $\kappa>0$, there exists $k\in \N$ such that
\[k\Pi(g)\leq \kappa < (k+1)\Pi(g),
\]
Let $s:=(k+1)\Pi(g)-\kappa>0.$ Then $\phi_g^{-\kappa}=\phi_g^{s}$ and therefore
\[U_{\phi_{g}^{-\kappa}}=e^{-\kappa T_{g}}=U_{\phi_g^{s}}=e^{sT_{g}}=e^{T_{sg}}.\]
Since $sg\in\mathrm{cone}(G)$, \Cref{prop: coneG} yields that $e^{-\kappa T_{g}}$ is $L^2$--\emph{STAR}.
Thus, for any $\eps,\kappa>0$ and any $\psi_{0} \in L^{2}(\T)$ with $\|\psi_{0}\|_{L^{2}(\T)}=1$,
there exist $T\in[0,\eps]$, $\beta\in [0,2\pi)$ and $u\in \PWC(0,T;\R^q)$ such that
\[\|\psi(T,u,\psi_0)-e^{i\beta}e^{-\kappa T_{g}}\psi_0\|_{L^{2}(\T)}
=
\|\psi(T,u,\psi_0)-e^{i\beta}e^{sT_{g}}\psi_0\|_{L^{2}(\T)}
<\eps.\]
\textbf{Step 2.} Since $-f=-g+\delta$, we have
\[T_{-f}=T_{-g}+T_\delta=T_{-g}+\delta\partial_x.\]
By Step~1 (applied with $\kappa=\frac1n$), $e^{\frac1nT_{-g}}=e^{-\frac1nT_g}$ is $L^2$--\emph{STAR}
for every $n\in\N^*$. Moreover, by \Cref{cor: translations}, $e^{\frac{\delta}{n}\partial_x}$ is
$L^2$--\emph{STAR} for every $n\in\N^*$. Therefore, by \Cref{prop: TK},
\[e^{-T_f}=e^{T_{-f}}=e^{T_{-g}+\delta\partial_x}
=\lim_{n\to \infty}\Bigl(e^{\frac{1}{n}T_{-g}}\,e^{\frac{\delta}{n}\partial_x}\Bigr)^n
\quad\text{strongly in }L^2(\T).\]
We conclude that $e^{T_{-f}}=e^{-T_f}$ is $L^2$--\emph{STAR}.
\end{proof}

Finally \Cref{prop: spanG} is a direct consequence of the previous result.

\subsection{Lie properties}
We define
\[\mathcal L:=\bigl\{f\in \mathrm{Vec}(\T),\;\forall t\in\R,\ U_{\phi_f^{t}}\ \text{is }L^{2}\text{--STAR}\bigr\}.\]
Since $e^{T_h}$ is $L^2$--\emph{STAR} for every $h\in \mathrm{span}(G)$ and
$e^{tT_h}=e^{T_{th}}$, we indeed have $\mathrm{span}(G)\subset \mathcal L$.

\begin{proposition}[{\cite[Theorem~15]{BeauchardPozzoli2025AIHPC}}]\label{prop: L_Lie}
The set $\mathcal L$ is a Lie subalgebra of $\mathrm{Vec}(\T)$.
\end{proposition}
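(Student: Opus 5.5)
To prove that $\mathcal L$ is a Lie subalgebra, I would verify three closure properties directly from the definition and the two stability results of \Cref{lem:closure-L2STAR}: (i) $\mathcal L$ is stable under multiplication by real scalars; (ii) $\mathcal L$ is stable under sums; (iii) $\mathcal L$ is stable under Lie brackets $[f,g]=fg'-f'g$. The key tool is that $t\mapsto U_{\phi_f^t}=e^{tT_f}$ is a strongly continuous unitary group with skew-adjoint generator $T_f$, and $f\mapsto T_f$ is a Lie algebra homomorphism up to sign: $[T_f,T_g]=T_{fg'-f'g}$ on smooth functions. This identity follows from a direct computation: the zeroth-order terms $\tfrac12 f'$, $\tfrac12 g'$ combine correctly.

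\textbf{Scalars.} If $f\in\mathcal L$ and $c\in\R$, then $U_{\phi_{cf}^t}=e^{tcT_f}=U_{\phi_f^{ct}}$. This is $L^2$--STAR by the definition of $\mathcal L$, so $cf\in\mathcal L$.

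\textbf{Sums.} For $f,g\in\mathcal L$, the operator $T_f+T_g=T_{f+g}$ is essentially skew-adjoint on $C^\infty(\T)$, since it is a first-order operator with smooth real coefficients. The Trotter--Kato formula of \Cref{prop: TK}, applied to $iT_f$ and $iT_g$, then gives
\[
\bigl(e^{\frac tnT_f}e^{\frac tnT_g}\bigr)^n\to e^{tT_{f+g}}
\]
strongly. Each factor is $L^2$--STAR, so each product is $L^2$--STAR by stability under composition. Stability under strong limits then yields $f+g\in\mathcal L$.

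\textbf{Brackets.} This is the main obstacle. I would use the group-commutator formula
\[
\Bigl(e^{\sqrt{t/n}\,T_f}e^{\sqrt{t/n}\,T_g}e^{-\sqrt{t/n}\,T_f}e^{-\sqrt{t/n}\,T_g}\Bigr)^n\longrightarrow e^{t[T_f,T_g]}=e^{tT_{[f,g]}}
\]
strongly. Each factor is $L^2$--STAR because $\pm s f,\pm s g\in\mathcal L$. Since the classical Trotter--Kato result does not cover commutators of unbounded operators, I would instead exploit the geometric structure. The flows act by diffeomorphisms, and $\phi^{s}_f\circ\phi^{s}_g\circ\phi^{-s}_f\circ\phi^{-s}_g=\phi^{s^2}_{[f,g]}+O(s^3)$ in $C^1$, uniformly on $\T$.

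First, I would establish convergence on a dense set of smooth $\psi$. The $n$-fold product of these diffeomorphisms converges in $C^1$ to $\phi^t_{\pm[f,g]}$; the sign is fixed by the convention $e^{tT_f}=U_{\phi_f^t}$, and any sign discrepancy is harmless since $\mathcal L$ is closed under scalars. This gives convergence of the corresponding $U_P$ operators. Here I would use that $U_P\psi$ depends continuously on $P$ in $C^1$ for $\psi\in C^\infty$, that $U_{P\circ Q}=U_QU_P$, and the standard telescoping estimate: $n$ errors of size $O(n^{-3/2})$ sum to $o(1)$.

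Second, I would extend to all of $L^2$ by density, using unitarity. This again relies on closure under composition and strong limits from \Cref{lem:closure-L2STAR}. Because $t\mapsto tf$ scales trivially, the result holds for every $t\in\R$, so $[f,g]\in\mathcal L$.
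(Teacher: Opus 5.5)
Your proposal is correct. The paper gives no proof of its own and simply imports Theorem~15 of Beauchard--Pozzoli, and your route (Trotter products for $f+g$, and for $[f,g]$ the $C^1$ convergence of iterated group commutators of flows combined with $U_{P\circ Q}=U_QU_P$ and the continuity of $P\mapsto U_P\psi$) is the standard argument behind that result. Two points deserve to be written out: negative times for the bracket come from rerunning the argument with $-f\in\mathcal L$ in place of $f$, rather than from scaling $[f,g]$ itself, which is not yet known to lie in $\mathcal L$; and the telescoping step needs the stability bound $\|\partial_x(\Psi_s^{k})\|_{L^\infty}\le(1+Cs^2)^k\le e^{Ct}$ for $k\le n$, where $\Psi_s$ is the group commutator at $s=\sqrt{t/n}$.
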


Let
\[\mathcal G:=\mathrm{Lie}\bigl(\{f\partial_x,\;f\in G\}\bigr)\subset \mathrm{Vec}(\T).\]
Recall that in dimension one we have
$[f\partial_x,g\partial_x]=(f g'-g f')\,\partial_x$.

\begin{proposition}\label{prop: LieG}
Let $p$ a trigonometric polynomial. Then $p\partial_x \in \mathcal G$.
\end{proposition}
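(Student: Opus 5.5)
The plan is to show that $\mathrm{span}(G)$ already contains the constants and every Fourier mode of frequency at least $2$, and then to recover the two missing modes $\cos(x),\sin(x)$ with a single Lie bracket. Since $\mathcal G$ is a Lie algebra, hence a real vector space, containing $f\partial_x$ for every $f\in G$, we have $\mathrm{span}(G)\,\partial_x\subset\mathcal G$. Linearity in $p$ then reduces the statement to checking that $\partial_x$, $\cos(Nx)\partial_x$ and $\sin(Nx)\partial_x$ lie in $\mathcal G$ for all $N\ge1$.

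\textbf{Step 1: polarization.} The elements of $G$ are exactly the squares $v^2$ of smooth real functions $v$ of zero mean, since $v=\varphi'$ for a periodic $\varphi$. For two such functions $v,w$, their sum $v+w$ also has zero mean, so
\[
vw=\tfrac12\bigl((v+w)^2-v^2-w^2\bigr)\in\mathrm{span}(G).
\]
I will apply this with $v,w\in\{\cos(kx),\sin(kx)\}$, $k\ge1$. First, $\cos^2(x)+\sin^2(x)=1\in\mathrm{span}(G)$. Next, for $k,m\ge1$,
\[
\cos(kx)\cos(mx)-\sin(kx)\sin(mx)=\cos((k+m)x),\qquad \sin(kx)\cos(mx)+\cos(kx)\sin(mx)=\sin((k+m)x).
\]
Hence $\cos(Nx),\sin(Nx)\in\mathrm{span}(G)$ for every $N\ge2$. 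The frequency-one modes cannot be produced this way: a product of two modes of frequency at least $1$ contains frequency $1$ only together with other frequencies, and no linear combination isolates it. This is why brackets are needed.

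\textbf{Step 2: one bracket for frequency one.} Using $[f\partial_x,g\partial_x]=(fg'-gf')\partial_x$, I compute
\[
[\cos(2x)\partial_x,\cos(3x)\partial_x]=\bigl(-3\cos(2x)\sin(3x)+2\cos(3x)\sin(2x)\bigr)\partial_x=\bigl(-\tfrac12\sin(5x)-\tfrac52\sin(x)\bigr)\partial_x.
\]
Since $\sin(5x)\partial_x\in\mathcal G$ by Step 1, this gives $\sin(x)\partial_x\in\mathcal G$. The analogous bracket $[\sin(2x)\partial_x,\cos(3x)\partial_x]$ produces $\cos(5x)$ together with a nonzero multiple of $\cos(x)$, and hence gives $\cos(x)\partial_x\in\mathcal G$; I only need to check that the $\cos(x)$ coefficient does not vanish. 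Together with Step 1 and linearity, this concludes the proof.

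I expect the main obstacle to be the observation that frequency one is not reached by products alone, so the bracket is genuinely needed. The one thing to verify carefully is that the frequency-$1$ coefficients of these brackets are nonzero.
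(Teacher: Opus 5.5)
Your proof is correct, and it takes a different route from the paper. The paper relies mostly on brackets. From $\sin^2x,\cos^2x$ it gets $1$, $\cos 2x$ and, with one bracket, $\sin 2x$. From $(\cos x\pm\cos 2x)^2$ it gets $g=\cos x+\cos 3x$. It then brackets with $\partial_x$ to produce $g',g'',g'''$ and isolates $\cos x=\tfrac18(9g+g'')$ and $\sin x=-\tfrac18(9g'+g''')$. Finally, an induction on the brackets $[\sin(nx)\partial_x,\sin(x)\partial_x]$ and $[\cos(nx)\partial_x,\sin(x)\partial_x]$ raises the frequency from $n$ to $n+1$. You instead note that $G$ is exactly the set of squares of zero-mean functions and polarize. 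This places every mode of frequency $\ge 2$ directly in $\mathrm{span}(G)$, with no induction and no brackets, and you then use a single bracket for frequency one. Your bracket computation is right. The coefficient you left unchecked is $-\tfrac52\neq0$, since $[\sin(2x)\partial_x,\cos(3x)\partial_x]=\bigl(\tfrac12\cos(5x)-\tfrac52\cos(x)\bigr)\partial_x$.

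Your stated reason for needing that bracket is false, although nothing in the proof depends on it. The claim that ``no linear combination isolates'' frequency one fails because the companion frequencies are already in $\mathrm{span}(G)$. Using your own polarization identity,
$$\cos x=\cos(2x)\cos x+\sin(2x)\sin x,\qquad \sin x=\sin(2x)\cos x-\cos(2x)\sin x$$
are sums of products of zero-mean modes, so both lie in $\mathrm{span}(G)$. The paper's $g=2\cos x\cos 2x$ together with $\cos 3x\in\mathrm{span}(G)$ is another instance of this. So your Step 2 is harmless but superfluous. Your argument actually shows that every trigonometric polynomial lies in $\mathrm{span}(G)$ itself, which is stronger than the proposition. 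Combined with Proposition~\ref{prop: spanG}, this gives that $e^{T_p}$ is $L^2$--STAR for every trigonometric polynomial $p$ without using the Lie-algebra property of Proposition~\ref{prop: L_Lie}.
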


\begin{proof}
\textbf{Step 1.} Let $\varphi_1(x)=\cos(x)$ and $\varphi_2(x)=\sin(x)$. Then for
\[
f_1:=(\varphi_1')^2=\sin^2 x,\quad f_2:=(\varphi_2')^2=\cos^2 x,
\]
we have $f_1\partial_x,f_2\partial_x\in\mathcal G$. Since $\mathcal G$ is a vector space,
\[
(f_1+f_2)\partial_x=\partial_x\in\mathcal G,
\quad
(f_2-f_1)\partial_x=\cos(2x)\,\partial_x\in\mathcal G.
\]
Moreover, $[f_1\partial_x,f_2\partial_x]=(f_1 f_2'-f_2 f_1')\partial_x.$ Using $f_1'=\sin(2x)$ and $f_2'=-\sin(2x)$, we obtain
\[
f_1 f_2'-f_2 f_1'
=
-\sin^2(x)\sin(2x)-\cos^2(x)\sin(2x)
=-\sin(2x).
\]
Thus $\sin(2x)\partial_x\in\mathcal G$. Summing up, we obtain
\[\{1,\cos(2x),\sin(2x)\}\partial_x\subset \mathcal G.\]
\textbf{Step 2.} Set $\varphi_\pm(x):=\sin(x)\pm \frac12\sin(2x)$, so that
\[
\varphi_\pm'(x)=\cos(x)\pm \cos(2x),
\quad
f_\pm:=(\varphi_\pm')^2=(\cos(x)\pm \cos(2x))^2.
\]
and $f_\pm\partial_x\in\mathcal G$. Define
\[
g:=\frac12(f_+-f_-)=2\cos(x)\cos(2x)=\cos(x)+\cos(3x),
\]
so $g\partial_x\in\mathcal G$. Since $\partial_x\in\mathcal G$, we also have
\[[\partial_x,g\partial_x]=g'\partial_x=(-\sin(x)-3\sin(3x))\partial_x\in\mathcal G.\]
Similarly $g^{(k)}\partial_x \in \mathcal G$ for all $k \in \N$. Finally,
\[\cos(x)=\tfrac{1}{8}(9g+g^{\prime\prime}),\quad
    \sin(x)=-\tfrac{1}{8}(9g^{\prime}+g^{\prime\prime\prime})\]
This step proves 
\[\{1,\sin(x),\cos(x), \cos(2x),\sin(2x)\}\partial_x\subset\mathcal G.\] 

\noindent \textbf{Step 3.}
We prove by induction that for every $n\ge1$,
\[\{\sin(nx),\cos(nx)\}\partial_x\subset\mathcal G.\] 
The cases $n=1,2$ follow from Step~2. Now, assume that for any $n\ge2$ one has
\[\{\sin(nx),\cos(nx),\sin((n-1)x),\cos((n-1)x)\}\partial_x\subset\mathcal G.\]
Consider the Lie brackets
\[A_{n+1}:=[\sin(nx)\partial_x,\ \sin(x)\partial_x],
\quad
B_{n+1}:=[\cos(nx)\partial_x,\ \sin(x)\partial_x].\]
A direct computation gives
\[A_{n+1}=\big(\sin(nx)\cos(x)-n\sin(x)\cos(nx)\big)\partial_x,
\quad
B_{n+1}=\big(\cos(nx)\cos(x)+n\sin(x)\sin(nx)\big)\partial_x.\]
From standard trigonometric formula, we obtain
\[
\begin{cases}A_{n+1}=\tfrac12\big((1-n)\sin((n+1)x)+(1+n)\sin((n-1)x)\big)\partial_x,\\
B_{n+1}=\tfrac12\big((1-n)\cos((n+1)x)+(1+n)\cos((n-1)x)\big)\partial_x.
\end{cases}\]
Since $A_{n+1},B_{n+1}\in\mathcal G$ and $\{\sin((n-1)x),\cos((n-1)x)\}\partial_x\subset\mathcal G$
by the induction hypothesis, we can isolate the $(n+1)$-modes:
\[\begin{cases}
\sin((n+1)x)\partial_x
=\frac{2}{1-n}\,A_{n+1}-\frac{1+n}{1-n}\,\sin((n-1)x)\partial_x\in\mathcal G,\\
\cos((n+1)x)\partial_x=\frac{2}{1-n}\,B_{n+1}-\frac{1+n}{1-n}\,\cos((n-1)x)\partial_x\in\mathcal G.\end{cases}\]
This completes the induction.

\noindent \textbf{Step 4.}
Any trigonometric polynomial has the form
\[p(x)=a_0+\sum_{n=1}^N \big(a_n\cos(nx)+b_n\sin(nx)\big),\]
hence $p(x)\partial_x$ is a finite linear combination of the vector fields
$\partial_x$, $\cos(nx)\partial_x$, and $\sin(nx)\partial_x$, all of which belong to $\mathcal G$.
Therefore $p\partial_x\in\mathcal G$.
\end{proof}
We now turn to the transport part of the argument. The following result shows
that the unitary operator associated with the time-one flow of any smooth
globally Lipschitz vector field is $L^2$-\emph{STAR} for \eqref{eq:lkdv-sch}. Although the proof is essentially the same as in \cite{BeauchardPozzoli2025AIHPC}, we briefly recall it for the sake of completeness.
\begin{theorem}
\label{th: flows}
For every $f\in \mathrm{Vec}(\T)$, the unitary operator $U_{\phi_f^{1}}$ is
$L^2$-\emph{STAR} for \eqref{eq:lkdv-sch}.
\end{theorem}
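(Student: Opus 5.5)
The plan is to reduce to the trigonometric-polynomial case already handled by \Cref{prop: LieG}, and then pass to general smooth $f$ by density, using the closure property of \Cref{lem:closure-L2STAR}(2).

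\textbf{Step 1: trigonometric polynomials.} By \Cref{prop: spanG}, $\mathrm{span}(G)\subset\mathcal L$. By \Cref{prop: L_Lie}, $\mathcal L$ is a Lie subalgebra of $\mathrm{Vec}(\T)$. Hence $\mathcal G=\mathrm{Lie}(\{f\partial_x,\ f\in G\})\subset\mathcal L$. \Cref{prop: LieG} then gives $p\partial_x\in\mathcal L$ for every trigonometric polynomial $p$. In particular $U_{\phi_p^{1}}=e^{T_p}$ is $L^2$--STAR.

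\textbf{Step 2: approximation.} Given $f\in\mathrm{Vec}(\T)$, take trigonometric polynomials $p_k$ with $p_k\to f$ in $C^1(\T)$; Fejér means of $f$ work, with derivatives converging uniformly since $f$ is smooth. We show $e^{T_{p_k}}\psi\to e^{T_f}\psi$ in $L^2$ for every $\psi\in L^2(\T)$. All the operators are unitary, so it suffices to check this on the dense set $\psi\in C^\infty(\T)$. For such $\psi$ we use the explicit formula
\[
(e^{T_g}\psi)(x)=\sqrt{\partial_x\phi_g^1(x)}\,\psi(\phi_g^1(x)).
\]
Grönwall's lemma on the ODE $\partial_t\phi=g(\phi)$ gives $\phi_{p_k}^1\to\phi_f^1$ uniformly. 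Grönwall applied to the linearized equation $\partial_t\partial_x\phi_g^t=g'(\phi_g^t)\,\partial_x\phi_g^t$ gives $\partial_x\phi_{p_k}^1\to\partial_x\phi_f^1$ uniformly, using the uniform convergence of $p_k'$ together with the Lipschitz continuity of $f'$.

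These derivatives are bounded below by $e^{-\|g'\|_\infty}>0$, so the square roots also converge uniformly. It follows that $e^{T_{p_k}}\psi\to e^{T_f}\psi$ uniformly on $\T$, hence in $L^2$. Alternatively, one could run the Duhamel argument of \Cref{PropTransportSTAR} on $H^1$ data, using $\|(T_{p_k}-T_f)\varphi\|_{L^2}\lesssim\|p_k-f\|_{C^1}\|\varphi\|_{H^1}$ and uniform $H^1$ bounds on $e^{sT_f}$.

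\textbf{Step 3: conclusion.} \Cref{lem:closure-L2STAR}(2) now yields that $U_{\phi_f^1}=e^{T_f}$ is $L^2$--STAR.

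The main obstacle is Step 1, namely the Lie algebra property of $\mathcal L$. It is imported from \cite{BeauchardPozzoli2025AIHPC} and rests on three facts:
\begin{itemize}
\item closure under sums, via Trotter--Kato (\Cref{prop: TK});
\item closure under brackets, via the group-commutator limit $\bigl(e^{-T_g/\sqrt n}e^{-T_f/\sqrt n}e^{T_g/\sqrt n}e^{T_f/\sqrt n}\bigr)^n\to e^{T_{[f,g]}}$ (up to sign convention), or alternatively via conjugation $U_{\phi_g^{s}}^{-1}e^{tT_f}U_{\phi_g^s}=e^{tT_{(\phi_g^s)^*f}}$ followed by differentiation in $s$;
\item the identity $e^{tT_h}=e^{T_{th}}$, which handles negative times. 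In our setting negative times are available, since \Cref{prop: neg_cone} provides them.
\end{itemize}
The analytic care in the bracket limit (cores, and the strong convergence of the commutator products) is where the delicate work lies, but it carries over verbatim because only the $L^2$--STAR closure properties are used.
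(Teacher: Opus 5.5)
Your proposal is correct and follows the paper's proof. You derive $\mathcal G\subset\mathcal L$ from \Cref{prop: LieG} and \Cref{prop: L_Lie} to handle trigonometric polynomials, approximate $f$ in $C^1$ by Fej\'er means, and conclude with \Cref{lem:closure-L2STAR}(2). The only difference is how you prove the strong convergence $e^{T_{p_k}}\to e^{T_f}$: you use the explicit flow formula and Gr\"onwall on the characteristic ODE for smooth data, whereas the paper uses an $L^2$ energy estimate for $e^{tT_{f_k}}\psi_0-e^{tT_f}\psi_0$ together with a uniform $H^1$ bound on $e^{tT_f}\psi_0$ (the alternative you mention). Both arguments are valid.
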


\begin{proof}
Let $f\in \mathrm{Vec}(\T)$. By the Fej\'er theorem applied to $f$ and $f'$, we can choose a sequence of trigonometric polynomials $(f_k)_{k\ge1}\subset \mathrm{Vec}(\T)$ such that
\[\|f_k-f\|_{W^{1,\infty}(\T)}\xrightarrow[k\to\infty]{}0.\]	
By \Cref{prop: LieG,prop: L_Lie}, each unitary operator $e^{T_{f_k}}$ is $L^2$-STAR. Therefore, by \Cref{lem:closure-L2STAR}, it is enough to prove that
\begin{equation}\label{eq: scv Tf}
e^{T_{f_k}}\longrightarrow e^{T_f} \quad\text{strongly on }L^2(\T).
\end{equation}
We first prove \eqref{eq: scv Tf} for initial data in $H^1(\T)$. Fix $\psi_0\in H^1(\T)$ and define
\[z_k(t):=e^{tT_{f_k}}\psi_0,\quad z(t):=e^{tT_f}\psi_0, \quad t\in[0,1].\]

\noindent Then, set $\eta_k:=z_k-z=(e^{tT_{f_k}}-e^{tT_{f}})\psi_0$, thus
\begin{equation}\label{eq: eta}
\begin{cases}
\partial_t \eta_k=T_{f_k}\eta_k+(T_{f_k}-T_f)z& \quad t\in[0,1],\\
\eta_k(0)=0.&\end{cases}
\end{equation}
Since $T_{f_k}$ is skew-adjoint on $L^2(\T)$, taking the $L^2$ inner product of \eqref{eq: eta} with $\eta_k$ and taking the real part yields
\[\tfrac12\tfrac{d}{dt}\|\eta_k(t)\|_{L^2}^2=\Re\big\langle (T_{f_k}-T_f)z(t),\eta_k(t)\big\rangle_{L^2}.\]
Since $T_{f_k}-T_f=(f_k-f)\partial_x+\frac12(f_k'-f')$, we have
\begin{equation}\label{eq: eta energy}
\begin{aligned}
\tfrac{d}{dt}\|\eta_k(t)\|_{L^2} 
&\le\|f_k-f\|_{W^{1,\infty}}\|z(t)\|_{H^1}.
\end{aligned}
\end{equation}
We now estimate $z$ in $H^1$. Since $T_f$ is skew-adjoint,
\[\|z(t)\|_{L^2}=\|\psi_0\|_{L^2},\quad t\in[0,1].\]
Differentiating the equation $\partial_t z=f\partial_x z+\frac12 f' z$ in $x$, we have
\[\partial_t(\partial_x z)=f\partial_x^2 z+\frac32 f'\partial_x z+\frac12 f''z.\]
Multiplying by $\partial_x \overline{z}$, integrating over $\T$, and taking the real part, we obtain
\[\tfrac12\tfrac{d}{dt}\|\partial_x z\|_{L^2}^2=\Re\int_{\T} f\partial_x^2 z\overline{\partial_x z}dx
+\tfrac32\int_{\T} f'|\partial_x z|^2dx+\tfrac12\Re\int_{\T} f''z\overline{\partial_x z}dx.\]
Since $f$ is real-valued, $\dis\Re\int_{\T} f\partial_x^2 z\overline{\partial_x z}dx=-\tfrac12\int_{\T} f'|\partial_x z|^2dx.$
Therefore,
\[\begin{aligned}
\tfrac12\tfrac{d}{dt}\|\partial_x z\|_{L^2}^2&=\int_{\T} f'|\partial_x z|^2dx+\tfrac12\Re\int_{\T} f''z\overline{\partial_x z}dx\\	
&\le\|f'\|_{L^\infty}\|\partial_x z\|_{L^2}^2+\tfrac12\|f''\|_{L^\infty}\|z\|_{L^2}\|\partial_x z\|_{L^2}.\end{aligned}\]
Using Young's inequality and the conservation of $\|z(t)\|_{L^2}$, we deduce
\[\tfrac{d}{dt}\|\partial_x z\|_{L^2}^2\le C_f\big(\|\partial_x z\|_{L^2}^2+\|\psi_0\|_{L^2}^2\big),\]
for some constant $C_f>0$ depending only on $f$. By Gr\"onwall's lemma,
\begin{equation}\label{eq: z H1}
\sup_{t\in[0,1]}\|z(t)\|_{H^1}\le C_f\|\psi_0\|_{H^1}.
\end{equation}
Next, using \eqref{eq: z H1} in \eqref{eq: eta energy}, we obtain
\[\tfrac{d}{dt}\|(e^{tT_{f_k}}-e^{tT_f})\psi_0\|_{L^2}\le C_f\|f_k-f\|_{W^{1,\infty}(\T)}\|\psi_0\|_{H^1}, \quad t\in[0,1].\]
By integrating in time on $[0,1]$, we conclude that for every $\psi_0\in H^1(\T)$
\[\|(e^{T_{f_k}}-e^{T_f})\psi_0\|_{L^2}=\|\eta_k(1)\|_{L^2}\le C_f\|f_k-f\|_{W^{1,\infty}(\T)}\|\psi_0\|_{H^1} \xrightarrow[k\to\infty]{}0.\]
Thus,
\[e^{T_{f_k}}\psi_0\xrightarrow [k\to\infty]{}e^{T_f}\psi_0\quad\text{in }L^2(\T).\]
Finally, for  $\psi_0\in L^2(\T)$, let $(\psi_0^m)_{m\ge1}\subset H^1(\T)$ be such that
\[\psi_0^m\xrightarrow [m\to\infty]{}\psi_0\quad\text{in }L^2(\T).\]
It is enough to observe that as $e^{T_{f_k}}$ and $e^{T_f}$ are unitary on $L^2(\T)$, we have
\[\|(e^{T_{f_k}}-e^{T_f})\psi_0\|_{L^2}\le2\|\psi_0-\psi_0^m\|_{L^2}+\|(e^{T_{f_k}}-e^{T_f})\psi_0^m\|_{L^2}.\]
Yet, by \Cref{prop: LieG} and \Cref{prop: L_Lie}, $e^{T_{f_k}}$ is $L^2$--STAR. As a result, $e^{T_f}$ is $L^2$--STAR by \cref{lem:closure-L2STAR}, as a strong limit of $L^2-$STAR operators.
\end{proof}
We are now in position to prove the small-time global approximate controllability of \Cref{th: L2 global}
\begin{proof}[Proof of \Cref{th: L2 global}]
By \Cref{th: phases}, the system \eqref{eq:lkdv-sch} is small-time $L^2$ controllable for phases. By \Cref{th: flows}, for every $f\in \mathrm{Vec}(\T)$, the unitary operator $U_{\phi_f^{1}}$ is $L^2$--\emph{STAR}.

Therefore, we can apply \cite[Theorem~14]{BeauchardPozzoli2025AIHPC}, which yields the small-time $L^2$ controllability of the action of $\mathrm{Diff}^{0}(\T)$ (i.e. $U_P$ is $L^2$--\emph{STAR} for every $P\in\mathrm{Diff}^{0}(\T)$). Combining this with the small-time controllability of phases and invoking \cite[Theorem~13]{BeauchardPozzoli2025AIHPC}, we conclude that \eqref{eq:lkdv-sch} is small-time globally approximately controllable in $L^2(\T)$ up to a global phase. Since \Cref{th: phases} applies in particular to constant phases, this remaining global phase can be removed in an additional small-time step.
\end{proof}

\appendix
\section{Physical Motivation}
\label{app: physical}
In this part, we present a formal derivation of the KdV equation with bilinear controls. To this end, we consider two physical scenarios leading to a dispersive asymptotic model. In the first case, we analyze a one-dimensional Euler–Poisson system describing ion–acoustic waves in a plasma, where suitable control parameters are introduced to account for an externally induced drift current added to the particle flux, and for a variable temperature coefficient allowing transitions between the isothermal and pressureless regimes. In the second case, we study the one-dimensional shallow-water equations, where the control acts through the modulation of the permeability coefficient. 


\subsection{Drift current and ionic temperature.}
We consider the one-dimensional Euler–Poisson system describing ion–acoustic waves in a plasma, in non-dimensional form:
\begin{subequations}\label{eq: kdv plasma}
\begin{align}
\partial_t n+\partial_x(nu+Hu)=0, \label{eq: kdv plasma a}\\
\partial_t u+u\partial_x u+K\partial_x(\ln n)=-\partial_x\phi, \label{eq: kdv plasma b}\\
-\partial_x^2\phi=n-e^{\phi}. \label{eq: kdv plasma c}
\end{align}
\end{subequations}
Here $n(t,x)$, $u(t,x)$ and $\phi(t,x)$ respectively denote the ion density, velocity, and electric potential at time $t>0$. Equation \eqref{eq: kdv plasma a} represents mass conservation with an additional drift current $H(t,x)u$ accounting for externally induced transport. The second one \eqref{eq: kdv plasma b}, expresses momentum balance, including the pressure term and the electric forces, and the third equation \eqref{eq: kdv plasma c}, is the Poisson equation governing the electrostatic potential. The functions $H(t,x)$ and $K(t,x)\ge0$ are our control parameters: $H$ models an externally induced drift current acting on the particle flux, while $K$ plays the role of a variable temperature coefficient. 
According to the value of $K$, the plasma evolves either in the \emph{isothermal} ($K>0$) or in the \emph{pressureless} ($K=0$) regime, allowing controlled transitions between these two physical states.

In the study of plasma waves, several works have investigated asymptotic limit problems aiming to connect kinetic or fluid models with well-known dispersive equations; in particular, both experimental observations and theoretical analyses indicate that, in the long-wavelength regime, the dynamics of  \eqref{eq: kdv plasma} are formally governed by a Korteweg–de Vries (KdV) equation \cite{haragus2002linear,su1969korteweg,washimi1966propagation}. In particular, we mention the work of Y. Guo and X. Pu, who, by employing the classical Gardner–Morikawa transformation, rigorously established that, in the long-wavelength limit the solutions of the Euler–Poisson system converge globally in time to those of the Korteweg–de Vries (KdV) equation \cite{guo2014kdv} (in the case $K=H=0$). The inclusion of a drift–induced term in the continuity equation can be physically justified from kinetic models of plasmas under external forcing. In the Vlasov–Poisson framework, the introduction of a controlled field added to the self-consistent electric field—replacing $E$ by $E+H$ to influence charge transport and stabilize collective oscillations—has been investigated in \cite{einkemmer2024suppressing,lu2025controlling}. At the fluid level, this external forcing manifests as a modification of the particle flux, leading naturally to a continuity equation with an additional drift component $H(t,x)u$, modelling externally induced transport within the plasma. Such a term captures, in a simplified one-dimensional setting, the same physical mechanism by which externally applied electric or magnetic fields generate directed particle drifts in kinetic descriptions. We also mention \cite{glass2012controllability}, where controllability problems for the Vlasov–Poisson system were studied in the presence of an additional fixed external force $F(t,x,v)$, under which charged particles evolve according to a bilinear interaction with the field.

\subsubsection{Formal deduction}
Following \cite{su1969korteweg,guo2014kdv}, we introduce the classical Gardner–Morikawa scaling: for $\eps > 0$, 
\[\xi = \eps^{1/2}(x - c_0 t), 
\quad \tau = \eps^{3/2} t,\]
so that
\[
\partial_t=-c_0\eps^{1/2}\partial_\xi+\eps^{3/2}\partial_\tau,\quad
\partial_x=\eps^{1/2}\partial_\xi,
\]
which is used to capture the long-wavelength and slow-time behavior of the system. Similar to \cite[Eq 1.4]{guo2014kdv}, we consider the following formal expansion in $\eps$:
\[
n = 1 + \eps n_1 + \eps^2 n_2+\cdots, 
\quad
u = \eps u_1 + \eps^2 u_2+\cdots, 
\quad
\phi = \eps \phi_1 + \eps^2 \phi_2+\cdots.\]
This choice is justified by the fact that $(\bar{n}, \bar{u}, \bar{\phi}) = (1, 0, 0)$ 
is an equilibrium state of \eqref{eq: kdv plasma}. We take the controls at KdV strength
\[H=\eps h(\tau,\xi),\quad
K=\eps\kappa(\tau,\xi).
\]
\noindent \underline{\textbf{Order $\mathcal O(\eps)$ and  $\mathcal O(\eps^2)$  in Poisson equation.}} We start from the elliptic equation for $\phi$, namely \eqref{eq: kdv plasma c}. Recalling that $\partial_x^2=\varepsilon\partial_\xi^2$ and expanding the exponential term, we obtain
\[e^{\phi}
= 1 + \eps\phi_1 + \eps^2\!\Big(\phi_2+\tfrac12\phi_1^2\Big)
+ \mathcal O(\eps^3).\]
Now expand each side by powers of $\eps$:
\[-\partial_x^2\phi
= -\eps\partial_{\xi}^2\big(\eps\phi_1+\eps^2\phi_2+\cdots\big)
= -\eps^2\partial_{\xi}^2\phi_{1}+\mathcal  O(\eps^3)\]
and by \eqref{eq: kdv plasma c} we have
\[n-e^{\phi}
=\big(1+\eps n_1+\eps^2 n_2+\cdots\big)
-\big(1 + \eps\phi_1 + \eps^2\!\Big(\phi_2+\tfrac12\phi_1^2\Big)
+ \mathcal O(\eps^3)\big)\]
\[=\eps\!\big(n_1-\phi_1\big)+\eps^2\!\Big(n_2-\phi_2-\tfrac12\phi_1^2\Big)+\mathcal O(\eps^3).\]
At order $\eps$, we see 
\begin{equation}
\label{eq:poiss-e1}
n_1=\phi_1.
\end{equation}
Using the $\mathcal O(\eps)$ closure  at order $\mathcal O(\eps^2)$, we recover $-\partial_\xi^2\phi_1=n_2-\phi_2-\tfrac12\phi_1^2$. Thus
\begin{equation}\label{eq:poiss-e2}
n_2=\phi_2-\partial_\xi^2\phi_1+\tfrac12 \phi_1^2.
\end{equation}
\noindent \underline{\textbf{Order $\mathcal O(\eps^{3/2})$ and $\mathcal O(\eps^{5/2})$ in continuity and momentum.}} For the continuity equation \eqref{eq: kdv plasma a}, we have:
\[\partial_{t}n=-c_0\eps^{3/2}\partial_\xi n_1+\eps^{5/2}\big(\partial_\tau n_1-c_0\partial_\xi n_2\big)
+\mathcal{O}(\eps^{7/2}),\]
\[\partial_{x}(nu+Hu)=\eps^{3/2}\partial_\xi u_1
+\eps^{5/2}\partial_\xi\!\big(u_2+n_1u_1+hu_1\big)
+\mathcal{O}(\eps^{7/2}).\]
Thus $\partial_{t}n+\partial_{x}(nu+Hu)=0$ yields, order by order,
\[\mathcal O(\eps^{3/2}):\ -c_0\partial_\xi n_1+\partial_\xi u_1=0,\]
\[\mathcal O(\eps^{5/2}):\ \partial_\tau n_1-c_0\partial_\xi n_2+\partial_\xi\!\big(u_2+n_1u_1+hu_1\big)=0.\]
For the momentum equation \eqref{eq: kdv plasma b}, we have
\[\partial_t u=-c_0\eps^{3/2}\partial_\xi u_1+\eps^{5/2}\big(\partial_\tau u_1-c_0\partial_\xi u_2\big)+\mathcal O(\eps^{7/2}),\quad u\partial_x u=\eps^{5/2}u_1\partial_\xi u_1+\mathcal O(\eps^{7/2}),\]
\[\ln n=\eps n_1+\eps^2\Big(n_2-\tfrac12 n_1^2\Big)+\mathcal O(\eps^3)
\Rightarrow
(\ln n)_x=\eps^{3/2}\partial_\xi n_1+\eps^{5/2}\partial_\xi\Big(n_2-\tfrac12 n_1^2\Big)+\mathcal O(\eps^{7/2})\]
and with $K=\eps\kappa(\tau,\xi)$,
\[K(\ln n)_x=\eps^{5/2}\kappa\partial_\xi n_1+O(\eps^{7/2}), \quad -\partial_x\phi=-\eps^{3/2}\partial_\xi\phi_1-\eps^{5/2}\partial_\xi\phi_2+\mathcal O(\eps^{7/2}).
\]
Thus, $\partial_t u+u\partial_x u+K(\ln n)_x=-\partial_x\phi$ yields, order by order
\[\mathcal O(\eps^{3/2}): -c_0\partial_\xi u_1+\partial_\xi\phi_1=0,\]
\[\mathcal O(\eps^{5/2}):\partial_\tau u_1-c_0\partial_\xi u_2+u_1\partial_\xi u_1+\kappa\partial_\xi n_1
= -\partial_\xi \phi_2.\]
At order $\eps^{3/2}$, from the continuity we get $-c_0\partial_{\xi}n_1+\partial_{\xi}u_1=0$. We integrate in $\xi$ and obtain
\[u_1=c_0n_1+C(\tau),\]
where $C$ depends only on $\tau$. In our setting, we fix $C(\tau)\equiv0$ by a standard normalization (zero–mean as in the periodic case, localization $u_1,n_1\to0$ as $|\xi|\to\infty$, absorbing $C(\tau)$ into the phase by redefining the frame, etc.) Therefore, we have
\begin{equation}\label{eq:cont-e32-block}
u_1=c_0n_1.
\end{equation}
From the momentum equation \eqref{eq: kdv plasma b}, at order $\eps^{3/2}$, similar argument as before leads to
\begin{equation}\label{eq:mom-e32-block}
-c_0\partial_\xi u_1= -\partial_\xi[\phi_1+\Phi]
\quad\Longrightarrow\quad
u_1=\frac{\phi_1+\Phi}{c_0}.
\end{equation}
Using \eqref{eq:poiss-e1}-\eqref{eq:cont-e32-block}, we conclude
\begin{equation}\label{eq:c0-u1-final}
c_0=1,\quad u_1=n_1=\phi_1.
\end{equation}
At order $\mathcal O(\eps^{5/2})$ using \eqref{eq:c0-u1-final} continuity \eqref{eq: kdv plasma a} and momentum \eqref{eq: kdv plasma b} give respectively
\begin{equation}\label{eq:cont-e52}
\partial_\tau u_1+\partial_\xi(u_1^2)+\partial_\xi(hu_1)=\partial_\xi(n_2-u_2),
\end{equation}
\begin{equation}\label{eq:mom-e52}
\partial_\tau u_1+u_1\partial_\xi u_1+\kappa\partial_\xi u_1=\partial_\xi(u_2-\phi_2).
\end{equation}
From \eqref{eq:poiss-e2}, we observe that
$\partial_{\xi}(n_2-\phi_2)=-\partial_{\xi}^{3}u_{1}+\tfrac12 \partial_{\xi}(u_1^2)$.
Thus, \eqref{eq:mom-e52} and  \eqref{eq:cont-e52} give
\[2\partial_\tau u_1+ \tfrac32\partial_\xi(u_1^2)+\partial_{\xi}(hu_1)+\kappa\partial_\xi u_1=-\partial_{\xi}^{3}u_{1}+\tfrac12 \partial_{\xi}(u_1^2),\]
which implies
\begin{equation}\label{eq:pre-kdv}
\partial_\tau u_1+u_1\partial_\xi(u_1)+\tfrac12\partial_{\xi}^{3}u_{1}+\tfrac12\partial_{\xi}(hu_1)+\tfrac12\kappa\partial_{\xi}u_1=0.
\end{equation}
We now renormalize $(\tau,\xi)$ to put the linear and nonlinear coefficients in canonical KdV form. We rescale
\[
\xi=\beta\tilde\xi,\quad 
\tau=\gamma\tilde\tau,\quad 
u_1=\alpha\tilde u,
\]
so that $\partial_\xi=\beta^{-1}\partial_{\tilde\xi}$ and 
$\partial_\tau=\gamma^{-1}\partial_{\tilde\tau}$. 
Substituting into \eqref{eq:pre-kdv} and multiplying by $\gamma$ gives
\[
\tilde u_{\tilde\tau}
+\frac{\alpha\gamma}{\beta}\tilde u\tilde u_{\tilde\xi}
+\frac{\gamma}{2\beta^3}\tilde u_{\tilde\xi\tilde\xi\tilde\xi}
+\frac{\gamma}{2\beta}\partial_{\tilde\xi}(h\tilde u)
+\frac{\gamma}{2\beta}\kappa\tilde u_{\tilde\xi}=0.
\]
Choosing $\alpha=\beta^2/2$ and $\gamma=\beta^3/2$ makes the nonlinear and
dispersive coefficients equal to one.
Renaming $(\tilde u,\tilde\xi,\tilde\tau)\to(u,\xi,\tau)$ yields
\begin{equation}\label{eq:kdv-core}
u_\tau=-\partial_\xi(\partial_\xi^2u+hu+\tfrac12 u^2)-\kappa\partial_\xi u.
\end{equation}
This asymptotic approach thus produces two distinct control mechanisms: 
a conservative one, acting through the divergence term $\partial_\xi(hu)$, 
and a nonconservative one, represented by $-\kappa\partial_\xi u$. 
Different modelling choices or external actions on the Euler–Poisson system may naturally lead to alternative structures of control terms.

\subsection{Controlling modulus of permeability in shallow-water waves}
We begin with the one-dimensional Saint–Venant system describing shallow-water flows \cite[Section 13.10]{whitham2011linear}. In this setting, we assume that the permeability coefficient of the medium can be externally modulated, providing a natural way to introduce a bilinear control acting on the mass flux $q(t,x)u$. 
\begin{subequations}
\label{eq: kdv SV} 
\begin{align}
\partial_t\eta+\partial_x((\eta+h)u)=-q,\label{eq: kdv SV a}\\
\partial_t u+\partial_x(g\eta+\tfrac12 u^2)=0. \label{eq: kdv SV b}
\end{align}
\end{subequations}
Here $\eta(t,x)$ represents the free-surface elevation with respect to the mean level, 
$u(t,x)$ is the horizontal velocity, and $q(t,x)$ denotes the vertical exchange rate. 
The total water depth is $H=\eta+h$. 
The first equation expresses the conservation of mass, while the second one corresponds to momentum balance in the horizontal direction. 
The source term $q$ arises from the vertical mass balance of the incompressible fluid,
\[\partial_t\eta+\partial_x(Hu)=w_{\mathrm{surface}}-w_{\mathrm{bottom}},\]
where, in the absence of rain, evaporation, or other external effects, one typically assumes $w_{\mathrm{surface}}\approx0$. By Darcy’s law, the volumetric flux is given by $q=-\kappa\nabla p$,  and in the one-dimensional vertical direction $q=-\kappa\partial_z p$. 
Near the bottom boundary,
\[\partial_z p = \frac{p_{\mathrm{porous}}-(p_{\mathrm{atm}}+\rho g\eta)}{\delta}.\]
Assuming a hydrostatic pressure distribution 
$p(x,z,t)=p_{\mathrm{atm}}+\rho g\eta$  and neglecting suction effects ($p_{\mathrm{porous}}\approx p_{\mathrm{atm}}$), we obtain the normal flux 
$q_\perp=\kappa\dfrac{\rho g}{\delta}\eta$. The quantity $\kappa\dfrac{\rho g}{\delta}$ defines the effective permeability modulus, which we model as $v(t)a(x)$, where $v(t)$ controls the temporal opening or closure of the medium, and $a(x)$ describes the spatial localization of the permeability zone. Free surface elevation makes the flux proportional to local hydrostatic pressure.
\medskip

Returning to the Saint–Venant system and following \cite[Section~2.2, p.~10]{grimshaw2007nonlinear}, 
we apply the classical Gardner–Morikawa scaling for $\eps>0$:
\[\xi = \eps^{1/2}(x - c_0 t), 
\quad \tau = \eps^{3/2} t.\]
We seek formal asymptotic expansions in $\eps$:
\[\eta = \eps \eta_1 + \eps^2 \eta_2+\mathcal{O}(\eps^3), 
\quad u = \eps u_1 + \eps^2 u_2+\mathcal{O}(\eps^3).\]
The control is introduced at the same order as the KdV nonlinearity,
\[q=\eps^{5/2}v(\tau)a(\xi)\eta_{1}.\]
Substituting these expansions into the mass equation \eqref{eq: kdv SV a}, and collecting terms at order $\mathcal{O}(\eps^{3/2})$ yields
\[-c_0\partial_\xi \eta_1+\partial_\xi hu_1=0,\]
which after integration gives $hu_1=c_0\eta_1$. In the momentum equation \eqref{eq: kdv SV b}, we introduce the classical Boussinesq correction 
\cite[p.~462]{whitham2011linear}, which accounts for weak dispersive effects in shallow-water flows.  
This leads to the modified form
\[
\partial_t u+\partial_x(g\eta+\tfrac12 u^2)
+\frac{h}{3}\partial_x\partial_t^2 \eta=0.
\]
\begin{remark}
Alternatively, one may include a dispersive correction of the form 
$\partial_x^{3}\eta$ \cite[p.~461]{whitham2011linear}, 
or even a BBM–type term $\partial_x^2\partial_t u$. In fact,
\begin{align*}
\partial_x\partial_t^2 \eta= \eps^{3/2}c_0^2\partial_\xi^3 \eta_1 + O(\eps^{5/2}),\quad 
\partial_x^3 \eta=\eps^{3/2}\partial_\xi^3 \eta_1,\quad
\partial_x^2\partial_t u=-\eps^{3/2}c_0\partial_\xi^3 \eta_1 + O(\eps^{5/2}),
\end{align*}
all yielding equivalent asymptotic dynamics under the long-wave scaling at order $O(\eps^{3/2})$.
\end{remark}
At order $\eps^{3/2}$, the momentum equation \eqref{eq: kdv SV b} gives 
\[-c_0\partial_\xi u_1 + g\partial_\xi \eta_1 = 0,\]
which implies $g\eta_1 = c_0u_1$, and thus $c_0 = \sqrt{g h}$, 
the characteristic velocity of long gravity waves \cite[p.~456]{whitham2011linear}. 
At order $\eps^{5/2}$, from the mass equation \eqref{eq: kdv SV a} we obtain
\[\partial_\tau \eta_1 + \partial_\xi(\eta_1 u_1) = va\eta_1,\]
and using $u_1 = \tfrac{c_0}{h}\eta_1$, we find
\begin{equation}
\label{eq: mass KdV}
\partial_\tau \eta_1 + \tfrac{2c_0}{h}\eta_1\partial_\xi \eta_1 = va\eta_1.
\end{equation}
The momentum equation \eqref{eq: kdv SV b} at order $\eps^{5/2}$ gives 
\[\partial_\tau u_1 + u_1\partial_\xi u_1 + \frac{c_0^2 h}{3}\partial_\xi^3 \eta_1 = 0.\]
Substituting $u_1 = \tfrac{c_0}{h}\eta_1$, we obtain
\[\partial_\tau \eta_1 + \frac{c_0}{h}\eta_1\partial_\xi \eta_1 + \frac{c_0 h^2}{3}\partial_\xi^3 \eta_1 = 0.\]
Combining this with the mass equation \eqref{eq: mass KdV} provides the controlled KdV-type model
\[2\partial_\tau \eta_1 + \frac{3c_0}{h}\eta_1\partial_\xi \eta_1 + \frac{c_0 h^2}{3}\partial_\xi^3 \eta_1 = va\eta_1,\]
or equivalently, redefining $va$
\[\partial_\tau \eta_1 + \frac{3c_0}{2h}\eta_1\partial_\xi \eta_1 + \frac{c_0 h^2}{6}\partial_\xi^3 \eta_1 = va\eta_1,\]
Which corresponds to the KdV equation in the same setting (and constants) as \cite[Eq. 13.99]{whitham2011linear} in presence of bilinear controls.

\bibliography{KdV_Sch_bilinear}
\bibliographystyle{alpha}

\end{document}